\newcommand{\poly}{\mathsf{poly}}
\newcommand{\ot}{\otimes}
\title{Computational Monogamy of Entanglement and Non-Interactive Quantum Key Distribution}
\titlerunning{Computational Monogamy of Entanglement \& Non-Interactive QKD}
\date{}
\author{Alex B. Grilo\inst{1} \and Giulio Malavolta\inst{2} \and Michael Walter\inst{3,4,5} \and Tianwei Zhang\inst{3,6}}
\authorrunning{A.B.~Grilo, G.~Malavolta, M.~Walter, T.~Zhang}
\institute{Sorbonne Universit\'e, CNRS, LIP6 \and Bocconi University \and Faculty of Computer Science, Ruhr University Bochum \and Korteweg-de Vries Institute for Mathematics and QuSoft, University of Amsterdam \and Faculty of Physics, Ludwig Maximilian University of Munich \and Max Planck Institute for Security and Privacy}
\begin{document}
\maketitle
\begin{abstract}
Quantum key distribution (QKD) enables Alice and Bob to exchange a secret key over a public, untrusted quantum channel.
Compared to classical key exchange, QKD achieves \emph{everlasting security}: after the protocol execution the key is secure against adversaries that can do unbounded computations.
On the flip side, while classical key exchange can be achieved non-interactively (with two simultaneous messages between Alice and Bob), no non-interactive protocol is known that provides everlasting security, even using quantum information.

In this work, we make progress on this problem. Our main technical contribution is a \emph{computational} variant of the celebrated \emph{monogamy of entanglement} game, where the secret is only computationally hidden from the players, rather than information-theoretically. In these settings, we prove a negligible bound on the maximal winning probability over all strategies.
As a direct application, we obtain a non-interactive (simultaneous message) QKD protocol from any post-quantum classical non-interactive key exchange, which satisfies everlastingly secure \emph{assuming Alice and Bob agree on the same key}.
The protocol only uses EPR pairs and standard and Hadamard basis measurements, making it suitable for near-term quantum hardware.
We also propose how to convert this protocol into a two-round protocol that satisfies the standard notion of everlasting security.

Finally, we prove a \emph{no-go theorem} which establishes that (in contrast to the case of ordinary multi-round QKD) entanglement is necessary for non-inter\-active QKD, i.e., the messages sent by Alice and Bob cannot both be unentangled with their respective quantum memories if the protocol is to be everlastingly secure.\end{abstract}
\section{Introduction}
Quantum key distribution (QKD) \cite{bennett14} enables two parties, commonly referred to as Alice and Bob, to securely exchange a secret key over a public, untrusted quantum channel. In contrast to classical key exchange protocols, QKD offers two main advantages:
(i)~It requires only authenticated classical channels, which can be practically implemented using Minicrypt~\cite{impagliazzo95} computational assumptions (in contrast, it is widely believed that such assumptions are not sufficient for classical key exchange~\cite{IR89}).
(ii)~It guarantees \emph{everlasting security}: Even if an adversary becomes unbounded after the protocol execution, no information about the key is leaked.
This prevents attacks where the adversary records data to leverage future technological/algorithmic breakthroughs.

Given the fundamental nature of the problem, it is not surprising that QKD has become one of the most well-studied topics in the theory of quantum information \cite{MY98,LC99,mayers01,renner08,MW24} and in the experimental community \cite{JKL+13,KLH+15,YCL+17}.
It is known that three messages are sufficient for building QKD~\cite{ShorPreskill}.
A recent work~\cite{MW24} achieved the first two-message protocol for QKD with everlasting security, assuming the existence of (quantum-secure) one-way functions.\footnote{The same work also shows that computational assumptions are necessary in the two-message settings.}
Two messages are optimal for QKD, but in their protocol, Bob has to send his message after receiving the message from Alice. Therefore, their protocol still requires two rounds of communication -- in contrast to classical key exchange, which can be achieved \emph{non-interactively}, that is, using a single round of two \emph{simultaneous messages} between Alice and Bob~\cite{DH76}
This prompts the question:
\begin{center}
    \emph{Can quantum protocols match the round complexity of classical protocols, \\ while still achieving everlasting security?}
\end{center}
 The purpose of this work is to make progress on this question.
\subsection{Our Results}

In this work we consider the problem of \emph{non-interactive QKD}: we seek a protocol between Alice and Bob that consists of a \emph{single round} of simultaneous messages where, at the end of the interaction, Alice and Bob agree on a secret key. We consider an attacker that is computationally bounded during the execution of the protocol, but afterwards can perform arbitrary (computationally unbounded) computations.
In this setting, we present both positive and negative results.

\paragraph{Constructions.}
On the positive side, we show how to construct a non-interactive QKD protocol from any post-quantum classical non-interactive key exchange (NIKE).
The latter can be achieved from a variety of assumptions, including the hardness of the learning with errors (LWE) problem \cite{GdKQ+24} or of computational problems related to isogenies in elliptic curves \cite{CLM+18}.
Our protocol satisfies a weak notion of \emph{everlasting security}: roughly speaking, it everlasting security holds provided Alice and Bob agree on the same shared key.\footnote{More precisely we show a notion of \emph{search} hardness, i.e., we prove that the shared key of Alice and Bob is hard to guess, conditioned on Alice and Bob agreeing on the same key. We keep this aspect deliberately informal at this point, and we will make things more precise in the subsequent sections.}
Furthermore, in our protocol, only the message sent from Alice to Bob is quantum, while Bob's message is entirely classical.

Our security proof relies on a \emph{computational} variant of the \emph{monogamy-of-entanglement game} of~\cite{MOE}.
While in the original game a random basis choice~$\theta$ is informationally hidden until the parties have agreed on a quantum state, in our game the basis choice is only \emph{computationally} hidden (that is, it is only hidden for efficient algorithms).
The game proceeds by Alice, Bob, and Charlie jointly applying an efficient algorithm that prepares a shared quantum state of their systems~$ABC$.
Then Alice and Bob measure $A$ and $B$ in the $\theta$-basis to obtain outcomes~$K_A$ and $K_B$, respectively, while Charlie is allowed to apply an arbitrary (possibly inefficient) measurement to obtain outcome~$K_C$.
The players win if $K_A = K_B = K_C$.
We describe the game more formally in the technical outline below (\cref{subsec:tech}).
Our main technical contribution is the following theorem, which can be understood as a computational monogamy of entanglement result:

\begin{theorem}[Informal]\label{thm:intro moe}
If $\theta$ is computationally hidden, the winning probability of the players in the above-described computational monogamy-of-entanglement game is negligible.
\end{theorem}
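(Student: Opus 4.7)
The plan is to reduce the computational monogamy-of-entanglement game to the standard information-theoretic one, whose exponentially small winning probability is already established \cite{MOE}. First, I would fix the formal syntax of the game: a challenger samples $(pp,\theta)$ with $\theta$ an $n$-bit basis string that is computationally hidden in $pp$; the triple $(\mathcal{A},\mathcal{B},\mathcal{C})$ is a joint efficient algorithm that, on input $pp$, prepares a state $\rho_{ABC}(pp)$; only afterwards is $\theta$ handed to Alice and Bob, who perform the efficient $\theta$-basis measurements on $A$ and $B$; Charlie, who never learns $\theta$, applies an arbitrary POVM to $C$ to produce $K_C$. By Naimark dilation I would WLOG assume Charlie's measurement is projective, with projectors $\{\Pi_k^{(pp)}\}$ depending on $pp$ alone.

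The heart of the argument is a single-step hybrid: replace the real $\theta$ handed to Alice and Bob by a uniformly random $\theta'$ drawn independently of $pp$. In the hybrid, both $\rho_{ABC}(pp)$ and $\{\Pi_k^{(pp)}\}$ are fixed obliviously to $\theta'$, so for every outcome of $pp$ the resulting winning probability is upper bounded by the standard information-theoretic monogamy bound, which decays exponentially in $n$. If the computational winning probability exceeds this by more than a negligible amount, there is a non-negligible gap between the real game and the hybrid, and I would convert this gap into an efficient distinguisher for the computational hiding of $\theta$ in $pp$.

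Building such a distinguisher is the main obstacle, because the gap is defined in terms of the event $K_A=K_B=K_C$ and $K_C$ is produced by an unbounded measurement, so a naive distinguisher that simply plays out the game is not efficient. My plan is a non-uniform quantum reduction: for each security parameter, take as advice a succinct description (possibly a quantum state) of Charlie's optimal projectors $\{\Pi_k^{(pp)}\}$ for the given adversary, so that on input $(pp,\theta^*)$ the distinguisher can efficiently prepare $\rho_{ABC}(pp)$, perform the $\theta^*$-basis measurements on $A$ and $B$, apply the advice projectors on $C$, and output $1$ iff all three outcomes agree. Invoking the post-quantum, non-uniform security of the underlying NIKE (as satisfied by standard LWE-based constructions) then yields the contradiction. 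The subtlety to resolve here is the description length of Charlie's strategy: a worst-case POVM on $2^{\mathrm{poly}(\lambda)}$ dimensions is not classically succinct, which I would address either by passing to quantum advice, or by assuming sub-exponential hardness of the hiding primitive---a standard strengthening when reductions must accommodate unbounded adversaries.
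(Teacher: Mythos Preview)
Your reduction has a genuine gap that the proposed fixes do not close. The distinguisher you describe must implement Charlie's measurement in order to test the event $K_A=K_B=K_C$, and there is no reason this measurement admits a polynomial-size description. Charlie's register~$C$ has $\poly(\lambda)$ qubits, so an arbitrary POVM on it is specified by $2^{\poly(\lambda)}$-dimensional operators; polynomial-size quantum advice does not let a QPT machine implement such a measurement. The sub-exponential hardness route fails for the same reason: that trick applies when the reduction runs in polynomial (or mildly super-polynomial) time but loses a sub-exponential factor in advantage, whereas here the \emph{running time} of your distinguisher is $2^{\poly(\lambda)}$ with the polynomial determined by the adversary, so you would need the hiding primitive to be secure against adversaries of that size for every polynomial---which is essentially information-theoretic security and defeats the purpose. (There is also a small misreading of the game: in the paper, Charlie \emph{does} receive~$\theta$ before his measurement; this does not rescue the argument, since his POVM is still unbounded.)

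The paper avoids this obstacle entirely by never putting Charlie into the reduction. It partitions the $n$ qubit pairs into $n/s$ blocks of size~$s$ and introduces the efficient binary projective measurement $M^{(1)}_{AB}=\bigl(I-\ketbra{\phi^+}^{\otimes s}\bigr)^{\otimes n/s}$, $M^{(0)}_{AB}=I-M^{(1)}_{AB}$. On the $M^{(0)}$ branch, a direct operator-norm bound shows that \emph{any} POVM on~$C$ guesses the common outcome with probability at most $\sqrt{(n/s)2^{-s}}$; this step is information-theoretic and needs no reduction. On the $M^{(1)}$ branch, the paper bounds only $\Pr[K_A=K_B]$---an efficiently computable quantity involving just the $AB$ marginal and $\theta$-basis measurements---first for uniform~$\theta^*$ (where it is at most $2^{-n/s}$ by an explicit Bell-basis calculation) and then for the real~$\theta$ by an efficient distinguisher that prepares $\rho_{AB}^{(p)}$, applies the projection~$M^{(1)}$, measures both registers in the given basis, and checks agreement. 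Charlie's unbounded computation never enters. Choosing $s=\sqrt n$ balances the two branches and gives the negligible bound.
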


We believe that this result may be of independent interest and find other applications.
As a corollary, we obtain the following non-interactive QKD protocol:

\begin{theorem}[Informal]\label{thm:intro niqkd}
    Assuming the hardness of the LWE problem (or any other assumption that implies the existence of a post-quantum NIKE), there exists a non-interactive QKD protocol that offers everlasting security when Alice and Bob agree on the same key.
\end{theorem}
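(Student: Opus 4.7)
The plan is to build the protocol in the most natural way from a post-quantum NIKE and to reduce its security to Theorem~\ref{thm:intro moe}. Concretely, Alice samples $n$ EPR pairs and sends the second half of each pair, together with her NIKE public message $\mathsf{nike}_A$, to Bob; Bob sends only his NIKE message $\mathsf{nike}_B$. Upon receipt, both parties derive the NIKE shared secret and expand it deterministically (e.g., via a post-quantum PRG) into a basis string $\theta\in\{0,1\}^n$. Alice measures her halves and Bob measures his received halves in the basis specified by $\theta$ (standard basis on coordinates where $\theta_i=0$, Hadamard basis where $\theta_i=1$), obtaining raw keys $K_A$ and $K_B$. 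Correctness is immediate: the EPR state gives perfectly correlated outcomes under any common basis, and NIKE correctness guarantees that Alice and Bob derive the same $\theta$ except with negligible probability.

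For security, assume towards contradiction that there is an efficient man-in-the-middle adversary Eve who, after becoming computationally unbounded, outputs the shared key with non-negligible probability \emph{conditioned on Alice and Bob agreeing}. I would turn Eve into a winning strategy for the computational monogamy-of-entanglement game. The MoE challenger samples the hidden $\theta$; the reduction samples a simulated NIKE transcript $(\mathsf{nike}_A,\mathsf{nike}_B)$ whose derived key equals $\theta$. By NIKE (pseudorandomness) security, such a transcript is computationally indistinguishable from a real execution that would have produced a uniformly random $\theta$, which is what the MoE game requires. The joint efficient preparation is then defined by executing honest Alice, honest Bob, and Eve on this simulated transcript and on the EPR halves: system $A$ is Alice's unmeasured half, $B$ is Bob's half after any channel tampering by Eve, and $C$ is Eve's side information at the end of the efficient phase. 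Finally, honest Alice and Bob perform the $\theta$-basis measurement prescribed by the protocol, while Charlie runs Eve's unbounded post-processing. On the event that $K_A=K_B$ and Eve's guess is correct, one has $K_A=K_B=K_C$, so Eve's non-negligible conditional success translates to a non-negligible MoE winning probability, contradicting Theorem~\ref{thm:intro moe}.

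The step I expect to be most delicate is matching the conditional security notion of the protocol to the unconditional winning event of the MoE game. The MoE theorem bounds $\Pr[K_A=K_B=K_C]$, while ``everlasting security conditioned on agreement'' is naturally a statement about $\Pr[K_C = K_A \mid K_A=K_B]$; one needs a lower bound on $\Pr[K_A=K_B]$ to convert the latter into the former. A clean way is to augment the protocol with a short classical agreement test (revealing a few hashed bits of the raw key so that Alice and Bob publicly accept only when they agree), and to define the ``agreement'' event through this test; standard averaging then yields the desired joint-event bound with only polynomial loss. A second, more technical point is ensuring that the basis string used by the protocol is \emph{exactly} the random variable that the MoE game treats as computationally hidden -- this may require the MoE statement to be formulated with an auxiliary input (the NIKE transcript), and I would want to verify in the formal game that pseudorandomness of the NIKE-derived key suffices to instantiate the hiding hypothesis of Theorem~\ref{thm:intro moe}. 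Once these two bookkeeping issues are settled, the remainder of the argument is a direct reduction.
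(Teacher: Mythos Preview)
Your protocol is essentially the paper's (the PRG step is unnecessary---the paper simply takes the NIKE key space to be $\{0,1\}^{n(\lambda)}$ and uses the derived key as $\theta$ directly), and the overall reduction shape is right. But two points diverge from the paper in ways that matter.

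First, your reduction runs in the wrong direction. The computational MoE game of Theorem~\ref{thm:intro moe} is \emph{parametrized} by an efficiently sampleable distribution $\mathcal{Z}$ over pairs $(p,\theta)$, and its hypothesis is that $(p,\theta)\approx_c(p,\theta^*)$ for independent uniform $\theta^*$. The paper simply \emph{defines} $\mathcal{Z}$ via the NIKE: sample the public parameters and both key pairs, set $p=(\mathsf{pp},\mathsf{pk}_A,\mathsf{pk}_B)$ and $\theta=\mathsf{SdK}(\cdots)$. Post-quantum NIKE security is then literally the hiding assumption of the MoE theorem, and an attack on the NIQKD protocol \emph{is} a strategy for this instance of the game (Alice's EPR preparation together with Eve's tampering is the efficient preparation phase; Alice and Bob's basis measurements are the question and semi-honest answer phases; Eve's unbounded guess is Charlie's measurement). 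There is no need to ``simulate a NIKE transcript whose derived key equals $\theta$''---the preparation-phase algorithm never sees $\theta$, and producing a transcript with a prescribed derived key would amount to inverting the NIKE. Your later remark that the MoE statement may need an ``auxiliary input'' is exactly the right intuition; that auxiliary input is $p$, and it is already built into the game.

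Second, your worry about conditional versus joint probabilities comes from reading the informal security notion too strongly. The paper's formal definition of weak everlasting security is a min-entropy lower bound $H_{\min}(K\mid E)\geq\omega(\log\lambda)$, where $K:=K_A$ if $K_A=K_B$ and $K$ is fresh uniform randomness otherwise. This is equivalent (up to a $2^{-n}$ term from the disagree branch) to bounding the \emph{joint} event $\Pr[K_A=K_B=K_C]$, which is precisely what the MoE theorem delivers. No lower bound on $\Pr[K_A=K_B]$ is required, and your proposed fix of adding a hash-based agreement test would destroy non-interactivity. The paper does add such a test, but only in a separate \emph{two-round} protocol that upgrades weak everlasting security to the standard notion; it is not part of the non-interactive construction claimed in Theorem~\ref{thm:intro niqkd}.
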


Thus, we identify a natural and meaningful setting under which truly non-interactive QKD is possible to achieve with everlasting security, which was not known prior to our work.
At the quantum level, our protocol only uses EPR pairs, and Alice and Bob measure their state as soon as they receive each other's message.
This makes our protocol a plausible candidate for experimental validation, using existing or near-term quantum hardware.

We furthermore propose how to achieve the standard notion of everlasting security by a two-round simultaneous-message protocol that builds on top of our non-interactive protocol and has essentially the same complexity.
In particular, our two-round protocol still only uses EPR pairs -- in contrast to~\cite{MW24} which used entangled states of $\poly(\lambda)$ many qubits.
However, while the protocol of~\cite{MW24} uses only two messages in total and only assumes the existence of post-quantum one-way functions, our new two-round protocol uses four messages in total and requires the existence of a post-quantum NIKE, which is considered a stronger assumption.
The question of the existence of a non-interactive QKD scheme (with two simultaneous messages, one from Alice and one from~Bob) that achieves the standard notion of everlasting security, as posed in \cite{MW24}, remains~open.

\paragraph{No-go Result.}
While traditional QKD protocols can achieve security by sending single-qubit states, our non-interactive protocol requires Alice to create EPR pairs and store one qubit of each pair until she receives Bob's message.
While experimentally more challenging, we show that this to some extent unavoidable:

\begin{theorem}[Informal]\label{thm:intro nogo}
A perfectly-correct non-interactive QKD protocol can only be everlastingly secure if it uses entanglement.
\end{theorem}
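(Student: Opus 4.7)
My plan is to use the no-entanglement and perfect-correctness hypotheses to derive a rigid orthogonality structure on the honest messages, and then exhibit an unbounded attack that recovers the shared key by coherently tagging each message with a classical class label.

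First I analyze the honest execution. By the no-entanglement assumption the joint state just before the final measurements factors on $A_A \otimes M_B \otimes B_B \otimes M_A$ as $|\alpha(r_A)\rangle \otimes |\mu_B(r_B)\rangle \otimes |\beta(r_B)\rangle \otimes |\mu_A(r_A)\rangle$. Alice's POVM acts on $A_A M_B$ and Bob's on the disjoint $B_B M_A$, so the outcomes $K_A, K_B$ are statistically independent. Perfect correctness ($K_A = K_B$ almost surely) therefore forces both to be deterministic and equal, giving a function $\kappa(r_A, r_B)$. Since $K_A$ depends on $r_A$ only through $\alpha(r_A)$ but equals $K_B$, which depends on $r_A$ only through $\mu_A(r_A)$, the key actually factors through the messages: $\kappa(r_A, r_B) = \kappa(\mu_A(r_A), \mu_B(r_B))$.

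Next I extract orthogonality. Alice's determinism gives $P^\kappa |\alpha, \mu_B\rangle = |\alpha, \mu_B\rangle$, and $\sum_k P^k = I$ places the $|\mu_B\rangle$'s with different $\kappa$-values in mutually orthogonal $1$-eigenspaces of the POVM induced on $M_B$ via $|\alpha\rangle$. A symmetric statement holds on $M_A$. Quotienting by the equivalence ``give the same $\kappa$ against every counterparty,'' distinct equivalence-class representatives of $\{|\mu_A\rangle\}$ and of $\{|\mu_B\rangle\}$ are pairwise orthogonal, and we may pick a canonical Alice-memory $\alpha^*(\mu_A)$ per class.

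Eve's attack is the coherent isometry $V_A : |\mu_A\rangle \mapsto |\mu_A\rangle \otimes |\alpha^*(\mu_A)\rangle$ on Alice's message (and the symmetric $V_B$ on Bob's). By the previous step, $V_A$ is a genuine isometry on the honest message states, and crucially it leaves the message register untouched, so Alice and Bob receive the correct states and agree on $\kappa$. Eve keeps $|\alpha^*(\mu_A)\rangle$ as her ancilla. Post-protocol, using unbounded computation, she runs Alice's public key-extraction circuit on $|\alpha^*(\mu_A)\rangle$ together with the $|\mu_B\rangle$ obtained via $V_B$, recovering $\kappa$ and breaking everlasting security.

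The main obstacle is implementing $V_A$ in polynomial time during the protocol, since the adversary is computationally bounded there. I would address this by coherently running Alice's (public, polynomial-time) preparation circuit $U_A$ on a uniform-superposition randomness register, writing $\alpha^*$ into an ancilla as a polynomial-time function of the randomness, and uncomputing $U_A$, while using a controlled interaction with the intercepted $|\mu_A\rangle$ to align the simulated randomness with the real branch. The critical verification is that the inter-class orthogonality of the previous step is precisely what turns this otherwise ``best-guess'' procedure into a perfect isometry matching $V_A$ on every honest input; this is the technical heart of the argument.
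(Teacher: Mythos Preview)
Your first paragraph is essentially correct and matches the paper's key observation: unentangled messages plus perfect correctness force the key to be a deterministic function of the classical randomness. Your orthogonality analysis in the second paragraph is also sound (assuming without loss of generality that the final measurements are projective). The gap is in your attack. The isometry $V_A\colon |\mu_A\rangle \mapsto |\mu_A\rangle \otimes |\alpha^*(\mu_A)\rangle$ is well-defined thanks to the inter-class orthogonality, but you have not shown it is \emph{efficiently} implementable, and the everlasting-security model requires Eve to be QPT during the protocol. Your sketched fix does not close this: running $U_A$ on a uniform superposition over $r'$ gives you $\sum_{r'} |r'\rangle|\mu_A(r')\rangle|\alpha(r')\rangle$, but the ``controlled interaction'' meant to align this with the intercepted $|\mu_A(r_A)\rangle$ is left unspecified, and orthogonality between classes does not by itself give you an efficient projector onto the class subspaces. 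Within a class the messages need not be orthogonal, so there is no obvious efficient circuit that reads off the class label coherently from $|\mu_A\rangle$ alone.

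The paper sidesteps this entirely with a much simpler and manifestly efficient online phase. The crucial observation---which you derive but do not exploit---is that a projective measurement with a deterministic outcome is \emph{nondestructive}. So Eve can intercept $M_A$, run Bob's honest measurement on it with freshly sampled randomness $R_B^{(t)}$ for $t=1,\dots,2r(\lambda)$, record the outcomes $\alpha_t = f(R_A, R_B^{(t)})$, and forward $M_A$ unchanged; symmetrically for $M_B$. This is just $\poly(\lambda)$ invocations of the honest algorithms. In the unbounded offline phase Eve then samples $R_A^*$ from the conditional distribution of $R_A$ given consistency with all the $\alpha_t$, picks any consistent $R_B^*$, and outputs $f(R_A^*, R_B^*)$; a short probabilistic argument shows this equals $f(R_A,R_B)$ with probability at least $1/3 - \mathsf{negl}(\lambda)$. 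Your coherent-tagging idea would, if it worked, recover the key with certainty rather than constant probability, but to make it efficient you would in any case need the nondestructive-measurement trick to learn the class of the intercepted message---at which point the attack collapses to the paper's.
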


We prove this result by exhibiting an attack that does not disturb the quantum states and allows the attacker to learn the key with constant probability.

\subsection{Technical Outline}\label{subsec:tech}

\paragraph{Non-Interactive QKD and Weak Everlasting Security.}
Before explaining our approach, let us make the scenario more concrete.
We consider a setting where Alice and Bob exchange a single round of simultaneous messages, each consisting of a classical and quantum part.
While all classical messages are delivered honestly, to model the presence of an authenticated classical channel, the quantum channel is fully untrusted:
the attacker can apply an arbitrary quantum polynomial time (QPT) channel to manipulate the quantum messages (and entangle them with his own register) before they get delivered to Alice and Bob.
Once the protocol is completed, i.e., Alice and Bob have derived their local key~$K_A$ and~$K_B$, the attacker becomes computationally unbounded and can perform arbitrary computations in order to try to guess the key.
We say that the attacker succeeds if their guess is correct and furthermore $K_A = K_B$ (Alice and Bob agree on the same key).
We define \emph{weak everlasting security} to mean that attackers succeed only with negligible probability.

\paragraph{The Protocol Blueprint.} The template for our protocol is quite natural:
We combine the celebrated QKD protocol of~\cite{bennett14} with a \emph{classical} post-quantum non-interactive key exchange (NIKE), where the key is used to select the secret basis:
\begin{itemize}
    \item \emph{Alice:} Samples a key pair using the classical NIKE protocol, and prepares $n$ EPR pairs
    $\ket{\mathsf{EPR}}^{\otimes n} = \left(\frac{\ket{00} + \ket{11}}{\sqrt{2}}\right)^{\otimes n}$,
    then she sends her classical public key, along with the second qubit of each EPR pair, to Bob.
    \item \emph{Bob:} Samples a key pair using the classical NIKE protocol and sends his public key to Alice.

    \item \emph{Outputs:} Alice uses her private key and Bob's public key to derive a classical shared key~$\theta_A \in \{0,1\}^{n}$.
    Then she measures her qubits in the $\theta_A$-basis: she measures her $j$-th qubit in the standard basis if~$\theta_{A,j}=0$ and otherwise in the Hadamard basis. 
    Alice sets~$K_A \in \{0,1\}^n$ to be the bitstring containing the measurement outcomes.

    Bob proceeds identically, by using his private key and Alice's public key to derive a classical shared key~$\theta_B$ and obtaining~$K_B$ as the measurement outcomes of the~$\theta_B$-basis measurement on his qubits.
\end{itemize}
Correctness follows from correctness of the NIKE, since if Alice and Bob agree on the same basis $\theta_A = \theta_B$, then they measure their EPR pairs in the same basis, resulting in the same outcomes.

However, proving security is much less obvious.
One standard approach would be to appeal to a \emph{monogamy of entanglement} game in the spirit of~\cite{MOE}.
We will elaborate more on this later, but for the moment it suffices to say that known statements are \emph{information-theoretic}, i.e., they crucially use the assumption that even computationally-unbounded adversaries have no information about the basis~$\theta$.
In our setting this is rather not true: Because of classical NIKE protocol is only computationally secure, the basis choice is only computationally hidden (that is, hidden from efficient quantum adversaries), but not information-theoretically so.
To use the computational security of the NIKE, we therefore need a computational argument, i.e., an \emph{efficient} reduction.
However, simple reduction strategies do not seem to work either: we cannot just switch the basis~$\theta$ to a uniform string and appeal to the security of the NIKE protocol, because in the second stage the adversary's power is \emph{unbounded}.
Therefore, running the entire adversary as part of a reduction would take the reduction unbounded time, making the security guarantees of the classical NIKE protocol not applicable.
Therefore, while our strategy ought to appeal to the computational security of the NIKE, it has to do so in an indirect manner.

\paragraph{Computational Monogamy of Entanglement.}
We formalize our solution in a more abstract scenario, by defining and analyzing a computational variant of the monogamy-of-entanglement game of~\cite{MOE}.

We assume the existence of an efficiently-sampleable distribution $\mathcal{Z}(1^\lambda)$ supported on pairs $(p, \theta) \in \mathcal{P}_\lambda \times \{0,1\}^{n(\lambda)}$ for some polynomial $n=n(\lambda)$. We require that the following distributions are computationally indistinguishable:
\begin{equation}\label{eq:intro assm}
\left((p, \theta) : (p, \theta) \gets \mathcal{Z}(1^\lambda)\right) \approx_c
\left((p, \theta^*) : (p, \cdot) \gets \mathcal{Z}(1^\lambda); \theta^* \gets \{0,1\}^n\right).
\end{equation}
The game proceeds as follows:
\begin{enumerate}
\item \emph{Sampling Phase:} Alice samples $(p, \theta) \gets \mathcal{Z}(1^\lambda)$ and reveals $p$ to Bob and Charlie.
\item \emph{Efficient Preparation Phase:} Alice, Bob, and Charlie jointly apply a QPT algorithm (with input~$p$) to create a shared quantum state between their registers $A$, $B$,~$C$.
Registers~$A$ and~$B$ should consist of $n(\lambda)$ qubits, while $C$ can be arbitrary.
\item \emph{Question Phase:} Alice measures register~$A$ in the $\theta$-basis to obtain an outcome~$K_A$.
She then reveals $\theta$ to Bob and Charlie.
\item \emph{Semi-Honest Answer Phase:} Bob measures register~$B$ in the $\theta$-basis to obtain an outcome~$K_B$, while Charlie can apply an arbitrary (possibly inefficient) measurement of register~$C$ to obtain an outcome~$K_C$.
\end{enumerate}
The players win the game if~$K_A = K_B = K_C$.

\medskip

The key differences to the original monogamy-of-entanglement game of \cite{MOE} are as follows:
Most significantly for us, in our game Bob and Charlie have some information~$P$ about~$\theta$ \emph{before} creating their shared entangled state, which is not the case in \cite{MOE}.
On the other hand, we require the shared state to be efficiently preparable, and we also assume that Bob's measurement in the answer phase is performed honestly, whereas in \cite{MOE} is an arbitrary POVM.

Note that this game is a good model for the non-interactive QKD protocol described above:
Charlie essentially plays the role of the attacker;
in the preparation phase, Alice creates $n$ EPR pairs, and Charlie applies an arbitrary efficient quantum channel to create systems $B$ and $C$.

To prove a bound on the success probability in the above game, we consider the following thought experiment: Let $\rho_{AB}$ be the joint state of Alice and Bob, right before the question phase.
We split the state into $n/s$ blocks, each of size $s$ (we have some freedom in the choice of parameters, but for this overview it suffices to take $s = \sqrt{n}$).
Then we imagine applying the binary POVM $\{M_{AB}^{(0)}, M_{AB}^{(1)}\}$ given by
\begin{align*}
    M_{AB}^{(1)} = \left(I -\ketbra{\mathsf{EPR}}{\mathsf{EPR}}^{\otimes s}\right)^{\otimes \frac{n}{s}},
    && M_{AB}^{(0)} = I - M_{AB}^{(1)}.
\end{align*}
If the measurement outcome is~$0$, then, \emph{roughly speaking}, we project the state $\rho_{AB}$ onto a state that has \emph{at least}~$s$ EPR pairs shared between Alice and Bob.
Indeed we can prove that in this case, no matter what Charlie does, his probability of guessing Alice and Bob's outcomes (assuming they agree) is bounded by $\tilde O(2^{-s}) = \tilde O(2^{-\sqrt{n}})$.

To complete the proof, we need to bound the probability that the game is won if the above-described POVM returns outcome~$0$.
In fact, we can show something stronger:
In this case the probability that Alice and Bob agree is negligible (this is stronger because $K_A = K_B$ is a necessary but not sufficient condition for winning the game).
To see this, let us assume for a moment that Alice and Bob perform measurements in a basis~$\theta^*$ sampled uniformly at random and independently from~$p$.
In this case, the probability that Alice and Bob agree is given by
\[
    \Tr\left(\mathsf{E}_{\theta^*\in \{0,1\}^n}\left(\sum_{x\in \{0,1\}^n} {}_{\theta^*}\!\ketbra{xx}{xx}_{\theta^*}\right)\left(I-(\ketbra{\mathsf{EPR}}{\mathsf{EPR}})^{\otimes s}\right)^{\otimes n/s}\rho_{AB}^{(p)}\right),
\]
where $\ket x_\theta$ denotes the basis states in the~$\theta$-basis.
A direct calculation shows that this probability can be bounded by $2^{-n/s} = 2^{-\sqrt n}$, independently of the quantum state.
In the actual experiment Alice and Bob measure according to $\theta$, which is correlated with~$p$.
However, the probability of agreement cannot differ from the case treated above, as otherwise we could efficiently distinguish~$(p,\theta)$ from~$(p,\theta^*)$, in contradiction to \cref{eq:intro assm}.
This is the point where we finally appeal to the computational indistinguishability of the two distributions.
Crucially, this reduction only uses the efficiently prepared state of Alice and Bob, while Charlie's later unbounded computation does not enter the picture.

This concludes the analysis of the monogamy-of-entanglement game (\cref{thm:intro moe}).
It is not hard to obtain from this desired security of the non-interactive QKD protocol (\cref{thm:intro niqkd}).

\paragraph{Achieving Everlasting Security in Two Rounds.}
In order to obtain a QKD protocol with the standard notion of everlasting security (indistinguishability also in case of disagreement), we propose adding another round of simultaneous messages, where Alice and Bob test the equality of their key.
To achieve this without leaking too much information we consider a standard technique:
instead of sending~$K_A$ and~$K_B$ in the plain, Alice and Bob send and compare hashes of their respective keys. With overwhelming probability, this test fails if~$K_A \neq K_B$.
Finally, to turn search security (the key is hard to guess) to indistinguishability from random we take a quantum-proof randomness extractor, seeded by the XOR of two seeds sampled independently by Alice and Bob.

\paragraph{Entanglement is Necessary.}
We prove our impossibility result (\cref{thm:intro nogo}) by showing that if there is no entanglement, the key shared by Alice and Bob in an honest run of the protocol is a function of the classical randomness held by Alice and Bob. In particular, this implies that the honest measurements of the protocol are non-destructive: they do not collapse the quantum~messages!
An attack can then proceed as follows. Eve intercepts the message by Alice, simulates polynomially many possible runs of Bob, and computes simulated key for each run.
Similarly, Eve intercepts the message by Bob, simulates polynomially many possible runs of Alice, and computes the simulated key that would be output in each run.
After collecting this data, Eve forwards Alice's message to Bob and vice-versa, who continue the protocol.
Because the measurements are non-destructive, from the perspective of Alice and Bob they are in an honest run of the protocol.
On the other hand, the data collected by Eve allows her to later guess classical randomness such that the resulting key matches the key of Alice and Bob with constant probability.

\subsection{Open Problems}

As already mentioned above, the question of a non-interactive QKD satisfying the standard definition of everlasting security remains open from \emph{any} computational assumption.
We suspect that new ideas are needed to construct such a protocol (or to rule out its existence).
Another interesting open problem is to exhibit two-round protocols with a positive key rate.
At a more fundamental level, a fascinating direction is to strengthen our computational monogamy of entanglement result: It is conceivable that a bound can be proven even without restricting Bob to measure its state in the $\theta$-basis, instead allowing him to apply an arbitrary POVM.
Both the setting of a polynomial-time POVM and the setting of a computationally unbounded one are open, although we expect the latter to be difficult to prove given know techniques, since a polynomial-time reduction would not even be able to run Bob's algorithm.

\section{Preliminaries}

Throughout this work, we denote the security parameter by $\lambda$.
We denote by~$1^{\lambda}$ the all-ones string of length~$\lambda$.
We say that a function~$f$ is negligible in the security parameter~$\lambda$ if~$f(\lambda) = {\lambda}^{-\omega(1)}$ or, equivalently, $f(\lambda) = 2^{-\omega(\log\lambda)}$; often such a function is simply denoted~$\mathsf{negl}$.

For a finite set $S$, we write~$x \leftarrow S$ to denote that~$x$ is sampled uniformly at random from~$S$, and for a probability distribution~$\mathbb{P}$, we write~$x \leftarrow \mathbb{P}$ to denote that~$x$ is sampled according to~$\mathbb{P}$.
Unless stated otherwise, all random variables and probability distributions are finitely supported, that is, take values in finite sets.
We denote by~$[n]$ the set~$\{1, \dots, n\}$.
We write~$I$ for identity matrices or operators, and~$\Tr$ for the trace of a matrix or operator.
A unitary operator~$U$ is one that satisfies~$UU^\dagger = U^\dagger U=I$, and a Hermitian operator~$H$ is one such that~$H^\dagger = H$.

\subsection{Quantum Information}
In this section, we provide a brief overview of quantum information. For a more detailed introduction, see~\cite{NC16,watrous2018theory}.
A \emph{(quantum) register}~$A$ consisting of $n$ qubits is associated with the Hilbert space $\mathcal{H}_A = (\mathbb{C}^2)^{\otimes n}$. Given two registers $A$ and $B$, we denote the composite register by~$AB$.
The corresponding Hilbert space is given by the tensor product $\mathcal{H}_{AB} = \mathcal{H}_A \otimes \mathcal{H}_B$.

\paragraph{Quantum States.}
The \emph{(quantum) state} of a register $A$ is described by a density operator $\rho_A$ on $\mathcal{H}_A$, which is a positive semi-definite Hermitian operator with trace equal to one.
A state is called \emph{pure} if it has rank one.
Thus, pure quantum states can be represented by unit vectors $\ket\psi_A \in \mathcal{H}_A$, with $\rho_A = \ketbra{\psi}{\psi}_A$.
For a quantum state~$\rho_{AB}$ on $\mathcal{H}_{AB}$, we denote $\rho_A = \Tr_B(\rho_{AB})\in \mathcal{H}_A$ the reduced state of $\rho_{AB}$ on $A$.

The quantum formalism allows treating classical and quantum information on the same footing.
For example, if $X$ is a random variable with outcomes in some set~$\mathcal X$, its probability distribution can be described by the \emph{classical} quantum state~$\rho_X = \sum_{x \in \mathcal X} p_x \ketbra x$, where $p_x = \Pr(X = x)$.
For the uniform distribution, this called the \emph{maximally mixed state}~$\tau_X = \frac1{\lvert\mathcal X\rvert} \sum_{x \in \mathcal X}\ketbra{x}{x}$.
More generally, if we have a random variable~$X$ and a quantum register~$E$ such that $E$ is in state~$\rho_E^{(x)}$ conditional on $X=x$, this can be described by a \emph{classical-quantum (cq) state}~$\rho_{XE} = \sum_x p_x \ketbra x \otimes \rho_E^{(x)}$.
In the above, subscripts indicate the registers (we only omit them when the context is clear).

For a quantum state $\rho_{AB}$ on two registers~$A$ and~$B$, we often denote by~$\rho_A = \tr_B[\rho_{AB}]$ for the reduced state of register~$A$.
Dually, if $M_A$ is an operator (typically a unitary, a projection, or a POVM element, see below) on register~$A$, we extend it implicitly by the identity to an operator~$M_A \ot I_B$.
These notations are compatible:
we have $\tr[M_A \rho_{AB}] = \tr[M_A \tr_B[\rho_{AB}]] = \tr[M_A \rho_A]$.

\paragraph{Quantum Channels and Measurements.}
A \emph{(quantum) channel} $\mathcal{F}$ is a completely positive trace-preserving (CPTP) map from a register $A$ to a register $B$.
In other words, given any density matrix $\rho_A$, the channel $\mathcal{F}$ produces $\mathcal{F}(\rho_A) = \sigma_B$, which is another state on register~$B$, and the same applies when $\mathcal{F}$ is applied to the $A$-register of a quantum state $\rho_{AC}$, resulting in the quantum state~$\sigma_{BC} = (\mathcal{F} \ot \mathcal{I})(\rho_{AC})$, where $\mathcal{I}$ denotes the identity channel.
For any unitary operator~$U$, there is a quantum channel~$\mathcal U$ that maps any input state~$\rho$ to the output state $\mathcal{U}(\rho) := U \rho U^\dagger$.

A \emph{projective measurement} is defined by a set of projectors $\{ \Pi_j \}_j$ such that $\sum_j \Pi_j = I$.
A projector $\Pi$ is a Hermitian operator such that $\Pi^2 = \Pi$, that is, an orthogonal projection.
Given a state $\rho$, the measurement yields outcome $j$ with probability $p_j = \Tr(\Pi_j \rho)$, upon which the state changes to ${\Pi_j \rho \Pi_j/p_j}$.
A basis measurement is one where~$\Pi_j=\ketbra{e_j}$ and the $\{\ket{e_j}\}$ (necessarily) form an orthonormal basis.

A \emph{positive operator-valued measure (POVM)} is a generalization of a projective measurement.
A POVM is defined by a set of positive semi-definite operators~$\{E_j\}_j$ such that~$\sum_j E_j = I$ (that is, the $E_j$ no longer need to be projections).
As before, given a quantum state~$\rho$, the probability of obtaining outcome $j$ when performing the measurement is given by~$p(j) = \Tr(E_j \rho)$, but the state after the measurement is no longer uniquely specified.
Indeed, while any POVM measurement can be realized by a projective measurement on a larger Hilbert space (by Naimark's dilation theorem), different realizations can lead to different post-measurement states.
A \emph{binary POVM} is one that has two outcomes~0 and~1.
Binary POVMs, are in one-to-one correspondence with quantum channels that output a single bit (i.e., the output state is a mixture of $\ketbra0$ and $\ketbra1$ for any input state).

\begin{lemma}[Operator Union Bound]\label{lmm:IminusPitensor}
Let $P_1,\dots,P_t$ be PSD operators such that~$I - P_i$ is also PSD for all~$i\in[t]$.
Then:
\begin{align*}
    I - \bigotimes_{i=1}^t P_i
\leq \sum_{i=1}^t \left( I^{\otimes (i-1)} \otimes (I - P_i) \otimes I^{\otimes (t-i)} \right)
\end{align*}
\end{lemma}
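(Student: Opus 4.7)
The plan is to prove the inequality by a telescoping identity followed by a simple monotonicity argument for tensor products of positive semi-definite operators. Since the hypothesis gives $0 \le P_i \le I$ for every~$i$, both ingredients will work cleanly.

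First I would establish the algebraic identity
\[
I^{\otimes t} - \bigotimes_{i=1}^t P_i \;=\; \sum_{i=1}^t \left(\bigotimes_{j=1}^{i-1} P_j\right) \otimes (I - P_i) \otimes I^{\otimes (t-i)}.
\]
The cleanest way to see this is telescoping: define $Q_k = \bigotimes_{j=1}^{k} P_j \otimes I^{\otimes (t-k)}$, so $Q_0 = I^{\otimes t}$ and $Q_t = \bigotimes_i P_i$. Then each consecutive difference factors as
\[
Q_{k-1} - Q_k \;=\; \left(\bigotimes_{j=1}^{k-1} P_j\right) \otimes (I - P_k) \otimes I^{\otimes (t-k)},
\]
and summing over $k \in [t]$ collapses the left-hand side to $Q_0 - Q_t$, which is exactly the identity above.

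Next I would upgrade this identity to the claimed operator inequality. The assumption that $P_j$ and $I - P_j$ are both PSD means $0 \le P_j \le I$, hence $\bigotimes_{j=1}^{i-1} P_j \le I^{\otimes (i-1)}$ by iteratively applying the fact that the tensor product of PSD operators is PSD together with the monotonicity rule: if $A \le B$ and $C \ge 0$, then $A \otimes C \le B \otimes C$ (which follows because $(B-A) \otimes C \ge 0$). Since $I - P_i \ge 0$, the same monotonicity rule lets us replace $\bigotimes_{j<i} P_j$ by $I^{\otimes (i-1)}$ term by term in the telescoping sum, yielding
\[
I^{\otimes t} - \bigotimes_{i=1}^t P_i \;\le\; \sum_{i=1}^t I^{\otimes (i-1)} \otimes (I - P_i) \otimes I^{\otimes (t-i)},
\]
which is the desired bound.

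There is no real obstacle here; the only subtlety is making sure the monotonicity step is applied with a PSD factor on the correct side, which is handled by the fact that $(I - P_i) \otimes I^{\otimes(t-i)}$ is PSD and we are only enlarging the left factor. The whole proof is a few lines once the telescoping identity is spelled out, so I would present the identity, justify the single monotonicity fact for tensor products, and conclude.
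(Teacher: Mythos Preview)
Your proof is correct and is essentially the same argument as the paper's: the paper proceeds by induction on~$t$, writing $I - \bigotimes_{i\le k+1} P_i = (I - \bigotimes_{i\le k} P_i)\otimes I + (\bigotimes_{i\le k} P_i)\otimes(I - P_{k+1})$ and bounding $\bigotimes_{i\le k} P_i \le I$, which is exactly your telescoping identity and monotonicity step unrolled recursively. The only cosmetic difference is that you present the telescoping sum all at once rather than as a formal induction.
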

\begin{proof}
For $t = 1$, the statement is trivially true.
We now prove this by induction on $k$, so let us assume that the statement is true for some value $k$.
We will prove it also holds for $k+1$:
    \begin{align*}
        I - \bigotimes_{i=1}^{k+1} P_i
    &=   I - \left( \bigotimes_{i=1}^k P_i \right) \otimes P_{k+1} \\
    &=   \left( I - \bigotimes_{i=1}^k P_i \right) \otimes I
    + \left( \bigotimes_{i=1}^k P_i \right) \otimes(I - P_{k+1}) \\
    &\leq   \left( I - \bigotimes_{i=1}^k P_i \right) \otimes I
    + I^{\otimes k} \otimes(I - P_{k+1}) \\
    &\leq \sum_{i=1}^k \left( I^{\otimes(i-1)} \otimes(I - P_i) \otimes I^{\otimes(k-i)} \right) \otimes I + I^{\otimes k} \otimes(I - P_{k+1}) \\
    &= \sum_{i=1}^k \left( I^{\otimes(i-1)} \otimes(I - P_i) \otimes I^{\otimes(k+1-i)} \right).
    \end{align*}
    The last inequality is by the induction hypothesis.
\end{proof}

\paragraph{Computational Basis, Hadamard Basis, and Bell Basis.}
For a single qubit, the \emph{computational basis} is denoted by
$\ket0 = \begin{psmallmatrix}1\\0\end{psmallmatrix}$
and
$\ket1 = \begin{psmallmatrix}0\\1\end{psmallmatrix}$,
while the \emph{Hadamard basis} is given by
$\ket+ = H \ket0 = \frac{\ket0 + \ket1}{\sqrt2}$
and
$\ket- = H \ket1 = \frac{\ket0 - \ket1}{\sqrt2}$.
Here,
$H = \frac1{\sqrt2}\begin{psmallmatrix}1 & 1 \\ 1 & -1\end{psmallmatrix}$
is the Hadamard unitary.
Note that $\ket0,\ket1$ is an eigenbasis of the Pauli $Z$-operator, while $\ket+,\ket-$ is an eigenbasis of the Pauli $X$-operator.
These operators are the unitaries defined by
$X = \begin{psmallmatrix}
0 & 1 \\
1 & 0
\end{psmallmatrix}$
and
$Z = \begin{psmallmatrix}
1 & 0 \\
0 & -1
\end{psmallmatrix}$,
and they are also Hermitian, so that $X^2 = Z^2 = I$.

For more than one qubit, we can choose either basis for each qubit:

\begin{definition}[$\theta$-Basis States]
We denote, for $x, \theta \in \{0,1\}^n$,
\begin{align*}
    \ket x_\theta = H^\theta \ket x,
\quad\text{where}\quad
    \ket x = \ket{x_1} \otimes \cdots \otimes \ket{x_n}
\;\text{and}\;
    H^\theta = H^{\theta_1} \otimes \cdots \otimes H^{\theta_n},
\end{align*}
where we use the notation $H^1 = H$ and $H^0 = I$, with $I$ the identity matrix.
The basis~$\{\ket x_\theta\}_{x\in\{0,1\}^n}$ is called the \emph{$\theta$-basis}.
\end{definition}

Thus $\theta$ labels the basis choice and $x$ the state with respect to the chosen basis.
For example, $\ket{01}_{10} = H\ket0 \otimes \ket1 = \ket+ \otimes \ket1$.

\bigskip

For two qubits, we not only have the product bases discussed earlier but also an important basis known as the \emph{Bell basis}.
It consists of the four maximally entangled \emph{Bell states}:
\begin{equation*}
    \begin{aligned}
    &\quad\ \ket{\phi^{+}} =\ket{\mathsf{EPR}} = \frac{1}{\sqrt{2}}(\ket{00}+\ket{11}),
    &\quad\ \ket{\psi^{+}} = \frac{1}{\sqrt{2}}(\ket{01}+\ket{10}),\\
    &\quad\ \ket{\phi^{-}} = \frac{1}{\sqrt{2}}(\ket{00}-\ket{11}),
    &\quad\ \ket{\psi^{-}} = \frac{1}{\sqrt{2}}(\ket{01}-\ket{10}).
    \end{aligned}
\end{equation*}
The Bell states form a joint eigenbasis of the two-qubit Pauli operators~$X \otimes X$ and~$Z \otimes Z$, and they are uniquely characterized by the corresponding eigenvalues.
In particular, $(X \otimes X) \ket{\phi^+} = (Z \otimes Z) \ket{\phi^+} = \ket{\mathsf{EPR}}$.
It follows that if one measures both qubits of an EPR pair in the standard basis, or both in the Hadamard basis, then the outcomes always coincide.
Furthermore:

\begin{lemma}[Support of EPR Pairs]\label{lem:support}
Let $\theta \in \{0,1\}$ and $P_\theta = \sum_{x=0}^1 {}_\theta\!\ketbra{xx}_\theta$, with $\ket{xx}_\theta := \ket x_\theta \ket x_\theta$.
Then, \[ P_\theta \ket{\phi^+} = \ket{\phi^+}. \]
\end{lemma}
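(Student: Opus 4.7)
The plan is to dispatch the two possible values of $\theta$ separately, each time identifying $P_\theta$ with the spectral projector onto the $+1$ eigenspace of an appropriate two-qubit Pauli and then invoking the eigenvalue characterization of $\ket{\phi^+}$ recalled just before the lemma.

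For $\theta = 0$, one has ${}_0\!\ket{xx}_0 = \ket{xx}$ for $x \in \{0,1\}$, so $P_0 = \ketbra{00} + \ketbra{11}$. This is precisely the projector onto the $+1$ eigenspace of $Z \otimes Z$ (the eigenbasis being $\ket{00}, \ket{11}$ with eigenvalues $+1$ and $\ket{01}, \ket{10}$ with eigenvalues $-1$). Since $(Z \otimes Z)\ket{\phi^+} = \ket{\phi^+}$, the vector $\ket{\phi^+}$ lies in this eigenspace, and hence $P_0 \ket{\phi^+} = \ket{\phi^+}$. For $\theta = 1$, we have ${}_1\!\ket{00}_1 = \ket{++}$ and ${}_1\!\ket{11}_1 = \ket{--}$, so $P_1 = \ketbra{++} + \ketbra{--}$, which is the spectral projector of $X \otimes X$ onto eigenvalue~$+1$ (since $\ket{\pm\pm}$ have eigenvalue $+1$ and $\ket{\pm\mp}$ have eigenvalue $-1$). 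From $(X \otimes X)\ket{\phi^+} = \ket{\phi^+}$ we conclude $P_1 \ket{\phi^+} = \ket{\phi^+}$.

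There is essentially no obstacle here: once $P_\theta$ is recognized as a Pauli spectral projector, the conclusion is immediate from the stated eigenvalue identities for $\ket{\phi^+}$. As a cross-check, one can also verify the claim by the direct expansion $\ket{\phi^+} = \tfrac1{\sqrt 2}(\ket{00} + \ket{11}) = \tfrac1{\sqrt 2}(\ket{++} + \ket{--})$ (the second equality following from $(H \otimes H)\ket{\phi^+} = \ket{\phi^+}$), which in each case exhibits $\ket{\phi^+}$ as a vector in the range of $P_\theta$.
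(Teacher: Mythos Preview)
Your proof is correct. The paper's own argument is precisely your ``cross-check'': it uses $(H\otimes H)\ket{\phi^+}=\ket{\phi^+}$ to write $\ket{\phi^+}=\tfrac1{\sqrt2}\sum_{x}\ket{xx}_\theta$ for either value of~$\theta$, which immediately exhibits $\ket{\phi^+}$ as lying in the range of~$P_\theta$. Your primary route---identifying $P_0$ and $P_1$ with the $+1$ spectral projectors of $Z\otimes Z$ and $X\otimes X$ and then invoking the Pauli eigenvalue characterization of~$\ket{\phi^+}$---is a genuinely different (though equally short) argument; it trades the Hadamard-invariance identity for the stabilizer description. Neither approach offers a real advantage over the other here, and since you already include the paper's argument as a verification, there is nothing to add.
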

\begin{proof}
Because the EPR pair is invariant under Hadamard gates on both qubits, $(H \ot H) \ket{\phi^+} = \ket{\phi^+}$, we have that $\ket{\phi^+} = \frac1{\sqrt 2} \sum_{x=0}^1 \ket{xx}_\theta$ for any $\theta\in\{0,1\}$.
\end{proof}

\paragraph{Statistical and Computational Distinguishability.}
The \emph{trace distance} between two states $\rho$ and $\sigma$ is defined as:
\[
\mathsf{Td}(\rho, \sigma) = \frac{1}{2}\norm{\rho-\sigma}_1 = \frac{1}{2}\Tr\left(\sqrt{(\rho-\sigma)^\dagger(\rho-\sigma)}\right).
\]
The operational meaning of the trace distance is that
$\frac{1}{2} (1+\mathsf{Td}(\rho, \sigma))$ is the maximal probability that two states $\rho$ and $\sigma$ can be distinguished by any (not necessarily efficient) quantum channel or POVM.
That is,
\begin{align*}
    \mathsf{Td}(\rho, \sigma) = \max_{\mathcal A} \abs*{ \Pr\left( \mathcal A(\rho) = 1 \right) - \Pr\left( \mathcal A(\sigma) = 1 \right) },
\end{align*}
where the maximum is over arbitrary quantum channels~$\mathcal A$ that output a single bit.
Thus, the trace distance generalizes the statistical (total variation) distance from probability theory.
We will also use the trace distance for \emph{subnormalized states}, that is, positive semi-definite operators with trace at most one (these generalize sub-probability distributions in probability theory).

We will also consider computational indistinguishability.
To this end, recall that a \emph{nonuniform QPT algorithm} $\mathcal A = \{\mathcal A_\lambda\}$ consists of a family of quantum channels that can be implemented by polynomial-size quantum circuits that get quantum states of a polynomial number of qubits as advice.
We call $\mathcal A$ a \emph{nonuniform QPT distinguisher} if the channels output a single bit.

\begin{definition}[Computational Indistinguishability]\label{def:computational indist}
We say that two families of states~$\{\rho_\lambda\}, \{\sigma_\lambda\}$ are \emph{computationally indistinguishable}, denoted $\{\rho_\lambda\} \approx_c \{\sigma_\lambda\}$, if for every nonuniform QPT distinguisher~$\mathcal A = \{\mathcal A_\lambda\}$ there exists a negligible function~$\mathsf{negl}$ such that the following holds for all~$\lambda$:
\begin{align}\label{eq:advantage}
     \abs*{ \Pr\left( \mathcal A_\lambda(\rho_\lambda) = 1 \right) - \Pr\left( \mathcal A_\lambda(\sigma_\lambda) = 1 \right) } \leq \mathsf{negl}(\lambda).
\end{align}
The two families are called \emph{strongly computationally indistinguishable}, denoted $\{\rho_\lambda\} \approx_{sc} \{\sigma_\lambda\}$, if there exists a single negligible function $\mathsf{negl}$ such that for every nonuniform QPT distinguisher~$\mathcal A = \{\mathcal A_\lambda\}$ there exists $\lambda_0$ such that \cref{eq:advantage} holds for all~$\lambda \geq \lambda_0$.
\end{definition}

The latter, stronger notion is also a natural one~\cite{haastad1999pseudorandom}.
It applies, e.g., when more concrete bounds on the advantage of adversaries are considered.
See also the discussion below \cref{def:nike}.

Finally, we note that a \emph{(uniform) QPT algorithm} is defined as above but the quantum circuit family is uniformly generated and there is no advice.
There are also interactive definitions of both uniform and nonuniform QPT algorithms.

\paragraph{Min-Entropy and Quantum-Proof Extractors.}
The conditional min-entropy of quantum states is defined as follows~\cite{KRS09}.

\begin{definition}[Conditional Min-Entropy]
Let $\rho_{AB}$ be a quantum state. The \emph{min-entropy of $A$ conditioned on $B$} is defined by
\[
H_{\min}(A|B)_\rho := -\inf_{\sigma_B} D_\infty\!\Bigl(\rho_{AB}\,\Big\|\,I_A \otimes \sigma_B\Bigr),
\]
where the infimum is taken over all density operators $\sigma_B$ on subsystem $B$, and where
\[
D_\infty(\alpha \,\|\, \beta) := \inf\!\left\{ \lambda \in \mathbb{R} : \alpha \leq 2^\lambda \beta \right\}.
\]
\end{definition}

In the case that the first system is classical, the following theorem states that the conditional min-entropy can be interpreted as a guessing probability~\cite{KRS09}.

\begin{theorem}[Min-Entropy of classical-quantum states]\label{thm:cq-min-entropy}
Consider a classical-quantum state $\rho_{XB} = \sum_x p_x \ketbra{x}{x} \otimes \rho_B^{(x)}$.
Then,
\[
H_{\min}(X|B)_\rho = -\log p_{\text{guess}}(X|B)_\rho,
\]
where
\[
    p_{\text{guess}}(X|B)_\rho
:= \max_{\{E^{(x)}_B\}} \sum_x p_x\,\Tr(E^{(x)}_B\, \rho_B^{(x)})
= \max_{\{E^{(x)}_B\}} \Tr( \rho_{XB} \sum_x \ketbra{x}_X \ot E^{(x)}_B ).
\]
is the maximal probability of obtaining~$X$ using an arbitrary POVM $\{E^{(x)}_B\}_x$~on~$B$.
\end{theorem}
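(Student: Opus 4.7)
My plan is to identify both sides of the claimed equation as the optimal values of a pair of semidefinite programs that are dual to each other, and then apply strong SDP duality.

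First, I would unwind the definition of $H_{\min}$. By absorbing the scalar $2^\lambda$ into the PSD operator via $\sigma' = 2^\lambda \sigma_B$, one obtains
\begin{align*}
2^{-H_{\min}(X|B)_\rho} = \inf\bigl\{\Tr(\sigma') : \sigma' \geq 0,\ \rho_{XB} \leq I_X \otimes \sigma'\bigr\},
\end{align*}
which is an SDP over PSD operators on $B$ (the normalization constraint on $\sigma_B$ drops out since we now minimize $\Tr(\sigma')$). The other side, $p_{\text{guess}}(X|B)_\rho$, is already manifestly an SDP: maximize $\sum_x p_x \Tr(E^{(x)} \rho_B^{(x)})$ over $E^{(x)} \geq 0$ subject to $\sum_x E^{(x)} = I$.

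Second, I would compute the Lagrangian dual of this maximization. Introducing a Hermitian multiplier $Y$ for the equality constraint and carrying out the standard calculation, one finds that the inner supremum over $E^{(x)} \geq 0$ is finite precisely when $Y \geq p_x \rho_B^{(x)}$ for every $x$, in which case the Lagrangian evaluates to $\Tr(Y)$. Thus the dual is
\begin{align*}
\min\bigl\{\Tr(Y) : Y = Y^\dagger,\ Y \geq p_x \rho_B^{(x)} \text{ for all } x \in \mathcal X\bigr\}.
\end{align*}
The crucial observation is that, because $\rho_{XB}$ is a cq state (block-diagonal in the $X$-register), the family of inequalities $\{Y \geq p_x \rho_B^{(x)}\}_x$ is \emph{equivalent} to the single operator inequality $I_X \otimes Y \geq \rho_{XB}$. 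Hence the two SDPs have identical feasible sets and objectives.

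To close the argument, I would verify Slater's condition so as to invoke strong duality in finite dimensions: $E^{(x)} = I/|\mathcal X|$ is strictly feasible for the primal, and $Y = c\,I$ for sufficiently large $c > 0$ is strictly feasible for the dual. The equality $p_{\text{guess}}(X|B)_\rho = 2^{-H_{\min}(X|B)_\rho}$ then follows by strong SDP duality, whence the theorem upon taking $-\log$. The only genuinely quantum ingredient in the argument is the cq-structure step identifying the blockwise inequalities with the tensor-product inequality—without it, one would obtain only one direction of the equivalence; the rest is standard finite-dimensional SDP duality.
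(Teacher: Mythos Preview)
Your argument is correct and is in fact the standard proof of this result via semidefinite programming duality, essentially as in the cited reference~\cite{KRS09}. The paper itself does not give a proof of this theorem; it merely states the result and attributes it to~\cite{KRS09}. So there is no ``paper's own proof'' to compare against beyond that citation, and your SDP duality approach is precisely the one used there: rewrite $2^{-H_{\min}(X|B)}$ as the SDP $\min\{\Tr \sigma' : \sigma'\geq 0,\ I_X\otimes\sigma'\geq\rho_{XB}\}$, recognize $p_{\text{guess}}$ as its dual, and invoke strong duality via Slater. The one point worth stating explicitly (you do mention it) is that the equivalence between $\{Y\geq p_x\rho_B^{(x)}\}_x$ and $I_X\otimes Y\geq\rho_{XB}$ uses the block-diagonal structure of cq-states; for a general $\rho_{XB}$ only one direction holds, which is why the operational interpretation is specific to the classical-quantum case.
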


The conditional min-entropy satisfies the following chain rule~\cite[Lemma~11]{WTHR11}:

\begin{theorem}[Chain Rule]\label{chain}
Let $\rho_{ABZ}$ be a tripartite state that is classical on $Z$.
Then,
\[
H_{\min}(A|BZ)_\rho \geq
H_{\min}(A|B)_\rho - \log~\lvert Z\rvert,
\]
where $\lvert Z \rvert$ is the dimension of system $Z$ (that is, the size of the underlying classical alphabet).
\end{theorem}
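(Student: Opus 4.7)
The plan is to work directly from the operational definition of conditional min-entropy via~$D_\infty$, constructing an explicit witness state~$\sigma_{BZ}$. Set $\lambda := H_{\min}(A|B)_\rho$. By the definition of~$D_\infty$ it suffices (up to arbitrarily small slack, which I will suppress) to assume that there exists a density operator~$\sigma_B$ with $\rho_{AB} \leq 2^{-\lambda}\, I_A \otimes \sigma_B$. The goal is then to exhibit a density operator~$\sigma_{BZ}$ satisfying $\rho_{ABZ} \leq 2^{-\lambda + \log\lvert Z\rvert}\, I_A \otimes \sigma_{BZ}$, from which the claimed lower bound on $H_{\min}(A|BZ)_\rho$ follows at once.

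The natural candidate is $\sigma_{BZ} := \sigma_B \otimes \tau_Z$, where $\tau_Z = I_Z/\lvert Z\rvert$ is the maximally mixed state on~$Z$; this is a valid density operator, and with this choice the target inequality reduces to $\rho_{ABZ} \leq 2^{-\lambda}\, I_A \otimes \sigma_B \otimes I_Z$. To establish this, I exploit the classicality on~$Z$ by decomposing $\rho_{ABZ} = \sum_z p_z\, \rho_{AB}^{(z)} \otimes \ketbra{z}{z}$ and, in parallel, $I_A \otimes \sigma_B \otimes I_Z = \sum_z (I_A \otimes \sigma_B) \otimes \ketbra{z}{z}$. Since the summands $p_z\, \rho_{AB}^{(z)}$ are PSD and add to~$\rho_{AB}$, each is dominated in the Loewner order by the full sum, giving $p_z\, \rho_{AB}^{(z)} \leq \rho_{AB} \leq 2^{-\lambda}\, I_A \otimes \sigma_B$ for every~$z$. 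Tensoring each such inequality with the orthogonal projector~$\ketbra{z}{z}$ and summing over~$z$ preserves the PSD order and yields the desired operator inequality.

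Assembling the pieces, $D_\infty\bigl(\rho_{ABZ}\,\bigl\|\,I_A \otimes \sigma_{BZ}\bigr) \leq -\lambda + \log\lvert Z\rvert$, and therefore $H_{\min}(A|BZ)_\rho \geq \lambda - \log\lvert Z\rvert = H_{\min}(A|B)_\rho - \log\lvert Z\rvert$. No step in this outline is genuinely hard; the only point that deserves care is the per-$z$ dominance, which relies on the elementary (but easy-to-slip-on) fact that each summand of a PSD sum is bounded by the sum itself in the Loewner order. The choice $\sigma_Z = \tau_Z$ is what converts the dimension of the classical register into precisely the additive $\log\lvert Z\rvert$ penalty, and it is this choice that makes the bound tight in the chain rule.
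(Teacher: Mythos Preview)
The paper does not give its own proof of this statement; it merely cites it as \cite[Lemma~11]{WTHR11} and moves on. So there is nothing in the paper to compare against.

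Your argument is correct and is essentially the standard proof. The only point worth a remark is the ``arbitrarily small slack'' you suppress: in finite dimensions the infimum defining $H_{\min}(A|B)_\rho$ is attained (the set of density operators is compact and the constraint is closed), so you may in fact take an optimal $\sigma_B$ with $\rho_{AB} \leq 2^{-\lambda}\, I_A \otimes \sigma_B$ exactly, and no slack is needed. The core step --- that $p_z\,\rho_{AB}^{(z)} \leq \sum_{z'} p_{z'}\,\rho_{AB}^{(z')} = \rho_{AB}$ because a summand of a PSD sum is dominated by the sum --- is exactly right, and tensoring each such inequality with the rank-one projector $\ketbra{z}{z}$ and summing over~$z$ is the clean way to assemble the block-diagonal operator inequality.
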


Next, we recall the following definition of (quantum-proof) randomness extractor.

\begin{definition}[Extractor]\label{def:extractor}
A PPT algorithm $\mathsf{Ext} \colon \mathcal{S} \times \mathcal{X} \to \{0, 1\}^\ell$ is called a \emph{seeded strong average-case $(k,\varepsilon)$-extractor} if the following holds:
for any cq-state $\rho_{XB} = \sum_x p_x \ketbra{x}{x} \otimes \rho_B^{(x)}$ such that $H_{\min}(X|B) \geq k$, we have
\[ \mathsf{Td}(\rho_{YSB}, \tau_Y \otimes \rho_{SB}) \leq \varepsilon, \]
where
\begin{align*}
    \rho_{YSB} &= \frac1{\lvert S\rvert} \sum_{s \in \mathcal S} \sum_{x \in \mathcal X} p_x \ketbra{\mathsf{Ext}(s,x)}{\mathsf{Ext}(s,x)} \otimes \ketbra{s}{s} \otimes \rho_B^{(x)}
\end{align*}
describes the joint state of the result of the extraction (Y), the seed (S), and the quantum side information (B), and where we recall that $\tau_Y$ denotes the maximally mixed state on Y.
\end{definition}

We recall the definition of universal hash functions.

\begin{definition}[Universal Hash Family]\label{UniversalHash}
A family $\mathbbm{H} = \{ h \colon [N]\to[M] \}$ of functions is a  \emph{universal hash} if for every $x,y \in [N]$ such that $x \ne y$, it~holds~that
\[
\Pr_{h \leftarrow \mathbbm{H}} \bigl( h(x) = h(y) \bigr) = \frac{1}{M}.
\]
\end{definition}

It is well-known that \emph{efficient} constructions of universal hash families exist \cite{carter1977universal}.
Moreover, randomness extractors can be constructed from universal hash families~\cite{DORS08,renner08,konig2011sampling}.

\begin{lemma}[Generalized Leftover Hash Lemma]\label{lemma:lhl}
Let $\mathbbm{H} = \{ h \colon [N]\to\{0,1\}^\ell \}$ be a universal hash family.
Then, $\mathsf{Hash} \colon \mathbbm{H} \times [N] \to \{0,1\}^\ell$ defined by $\mathsf{Hash}(h,x) = h(x)$ is a seeded strong average-case $(k,\varepsilon)$-extractor for any $k \geq \ell + 2 \log(1/\varepsilon)$.
\end{lemma}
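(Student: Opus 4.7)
The plan is to follow the standard proof of the quantum-proof leftover hash lemma, reducing the trace-distance bound to a weighted Hilbert--Schmidt norm bound with reference state given by the optimizer of the conditional min-entropy, and then invoking the 2-universality of $\mathbbm{H}$.

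First, I would use the definition of conditional min-entropy to pick an (optimal) state~$\sigma_B$ on $B$ witnessing $H_{\min}(X|B)_\rho \geq k$, that is,
\[
\rho_{XB} \;\leq\; 2^{-k}\, I_X \otimes \sigma_B.
\]
Set $T := \rho_{YSB} - \tau_Y \otimes \rho_{SB}$. I would then apply the standard weighted Cauchy--Schwarz inequality
\[
\|T\|_1 \;\leq\; \sqrt{|Y|\cdot|S|}\ \bigl\|\,(I_Y \otimes I_S \otimes \sigma_B^{-1/4})\, T\, (I_Y \otimes I_S \otimes \sigma_B^{-1/4})\bigr\|_2
\]
to reduce bounding the trace distance to bounding this weighted 2-norm.

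Second, I would expand the squared 2-norm explicitly. Writing $\rho_{YSB} = \frac{1}{|S|}\sum_{s,x} p_x \kb{h_s(x)}_Y \otimes \kb{s}_S \otimes \rho_B^{(x)}$ and $\tau_Y \otimes \rho_{SB} = \frac{1}{|Y||S|} I_Y \otimes \sum_{s,x} p_x \kb{s}_S \otimes \rho_B^{(x)}$, orthogonality on $S$ decouples the sum over seeds, and computing $\mathrm{Tr}(\kb{h_s(x)}\kb{h_s(x')}) = \mathbbm{1}[h_s(x) = h_s(x')]$ on~$Y$ collapses the calculation to
\[
\frac{1}{|S|^2}\sum_s \sum_{x,x'} p_x p_{x'}\Bigl(\mathbbm{1}[h_s(x)=h_s(x')] - \tfrac{1}{|Y|}\Bigr)\,\mathrm{Tr}\bigl(\sigma_B^{-1/2} \rho_B^{(x)} \sigma_B^{-1/2} \rho_B^{(x')}\bigr).
\]
Third, I would split this into diagonal ($x=x'$) and off-diagonal ($x \neq x'$) contributions. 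By 2-universality (\cref{UniversalHash}), $\mathbb{E}_{s}[\mathbbm{1}[h_s(x)=h_s(x')]] = 1/|Y|$ when $x \neq x'$, so the entire off-diagonal contribution vanishes after averaging over~$s$. For the diagonal term, the inequality $\rho_B^{(x)} p_x \leq 2^{-k}\sigma_B$ (implied by the min-entropy bound) together with a matrix Hölder-type argument yields $\mathrm{Tr}(\sigma_B^{-1/2}\rho_B^{(x)}\sigma_B^{-1/2}\rho_B^{(x)}) \leq 2^{-k} \mathrm{Tr}(\rho_B^{(x)})/p_x$, and hence the diagonal contribution is at most $2^{-k}(1-1/|Y|)/|S| \leq 2^{-k}/|S|$. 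Combining with the first step gives
\[
    \mathsf{Td}(\rho_{YSB},\tau_Y \otimes \rho_{SB}) \leq \tfrac{1}{2}\sqrt{|Y|\cdot 2^{-k}} = \tfrac{1}{2}\,2^{(\ell-k)/2} \leq \varepsilon,
\]
where the last inequality uses the hypothesis $k \geq \ell + 2\log(1/\varepsilon)$.

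\textbf{Main obstacle.} The subtle part is the third step: since $\rho_B^{(x)}$ and $\sigma_B$ generally do not commute, the passage from the min-entropy inequality $\rho_{XB} \leq 2^{-k} I_X \otimes \sigma_B$ to a usable scalar bound on $\mathrm{Tr}(\sigma_B^{-1/2}\rho_B^{(x)}\sigma_B^{-1/2}\rho_B^{(x)})$ requires care with operator ordering; one needs to argue via the operator-monotonicity of $A \mapsto \sigma_B^{-1/4} A \sigma_B^{-1/4}$ (or equivalently an appropriate pinching) to reduce it to the commuting case. All remaining manipulations are routine algebra and are handled by the orthogonality structure of the classical registers $Y$ and $S$.
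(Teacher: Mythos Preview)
The paper does not actually prove this lemma; it is stated as a known result with references to \cite{DORS08,renner08,konig2011sampling}. Your proposal is the standard Renner-style proof and is correct.

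One remark on your ``main obstacle'': the noncommutativity issue you flag is lighter than you suggest. From $p_x \rho_B^{(x)} \leq 2^{-k}\sigma_B$ one gets, by conjugating with $\sigma_B^{-1/4}$ (which preserves the PSD order), that $A_x := \sigma_B^{-1/4}\rho_B^{(x)}\sigma_B^{-1/4}$ satisfies $p_x A_x \leq 2^{-k}\sigma_B^{1/2}$. Since $A_x \geq 0$, taking the trace against $A_x$ gives $p_x\Tr(A_x^2) \leq 2^{-k}\Tr(A_x\sigma_B^{1/2}) = 2^{-k}\Tr(\rho_B^{(x)})$, which is exactly the bound you need; no pinching or deeper operator-monotonicity argument is required. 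Also note that with the paper's definition of universal hashing (equality $=1/M$ rather than $\leq 1/M$) the off-diagonal average is exactly zero; even under the weaker $\leq$ definition the off-diagonal traces $\Tr(\sigma_B^{-1/2}\rho_B^{(x)}\sigma_B^{-1/2}\rho_B^{(x')})$ are nonnegative (as traces of products of PSD operators), so the off-diagonal contribution is $\leq 0$ and can simply be dropped.
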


We also rely on the computational notion of a collision-resistant hash function, which we define next.
As in the definition of strong computational indistinguishability (\cref{def:computational indist}) we assume that there exists a single negligible function that applies to all QPT adversaries.

\begin{definition}[Collision-Resistant Hash Function]\label{def:CR}
A family $\{\mathbbm{H}_\lambda\}$ of function families is called a \emph{collision-resistant hash function} if there exists a negligible function $\mathsf{negl}$ such that the following holds:
for every QPT adversary $\mathcal{A}$ there exists $\lambda_0$ such that, for all $\lambda\geq\lambda_0$, we have
\[
\Pr_{h \leftarrow \mathbbm{H}_\lambda, (x,y)\gets\mathcal{A}(h)} \bigl( x \neq y \text{ and } h(x) = h(y)\bigr) \leq \mathsf{negl}(\lambda).
\]
\end{definition}

\subsection{Post-Quantum Non-Interactive Key Exchange}
Following~\cite{CKS08,CLM+18,FHKP13,GdKQ+24}, we formally define a post-quantum non-interactive key exchange protocol (that is, one that is computationally secure against quantum adversaries).

\begin{definition}[Post-Quantum Non-Interactive Key Exchange]\label{def:nike}
A \emph{post-quantum non-interactive key exchange (NIKE)} protocol is defined as a tuple
$\mathsf{NIKE} = (\mathsf{Stp},\mathsf{Gen},\mathsf{SdK})$ of the following algorithms, with an identity space $\mathsf{IDS}\subseteq \{0,1\}^{n(\lambda)}$ and a shared key space $\mathsf{SKS}\subseteq \{0,1\}^{n(\lambda)}$ for a polynomially bounded~$n(\lambda)$:
\begin{itemize}
    \item $\mathsf{pp} \leftarrow \mathsf{Stp}(1^{\lambda})$:
    Given the security parameter encoded in unary, $1^{\lambda}$, the PPT algorithm~$\mathsf{Stp}$ returns public system parameters $\mathsf{pp}$.
    \item $(\mathsf{sk}_A, \mathsf{pk}_A) \leftarrow \mathsf{Gen}(\mathsf{pp},A)$:
    Given the public parameters $\mathsf{pp}$ and an identity~$A \in \mathsf{IDS}$, the PPT algorithm $\mathsf{Gen}$ returns a secret-public key pair~$(\mathsf{sk}_A, \mathsf{pk}_A)$.
    \item $K\leftarrow \mathsf{SdK}(A, \mathsf{pk}_{A}, B,\mathsf{sk}_{B})$:
    Given an identity $A \in \mathsf{IDS}$ and a corresponding public key $\mathsf{pk}_{A}$, along with another identity $B \in \mathsf{IDS}$ and corresponding secret key $\mathsf{sk}_{B}$, $\mathsf{SdK}$ should be a \emph{deterministic} PPT algorithm that returns a shared key~$K \in \mathsf{SKS}$, or an abort symbol~$\perp$.
    If $A=B$ then $\mathsf{SdK}$ always returns~$\perp$.
\end{itemize}
We always assume the following two properties:
\begin{itemize}
    \item \emph{Correctness:} There exists a negligible function $\mathsf{negl}$ such that  for all~$A,B \in \mathsf{IDS}$, it holds that
    \[
    \Pr\left(\mathsf{SdK}(A, \mathsf{pk}_{A}, B,\mathsf{sk}_{B}) \neq \mathsf{SdK}(B, \mathsf{pk}_{B}, A,\mathsf{sk}_{A})\right) = \mathsf{negl}(\lambda),
    \]
    where $\mathsf{pp} \leftarrow \mathsf{Stp}(1^{\lambda})$, $(\mathsf{sk}_A, \mathsf{pk}_A) \leftarrow \mathsf{Gen}(\mathsf{pp},A)$, and $(\mathsf{sk}_B, \mathsf{pk}_B) \leftarrow \mathsf{Gen}(\mathsf{pp},B)$.


    \item \emph{Post-Quantum Security:}
    For all $A,B \in \mathsf{IDS}$, we have
    \begin{equation}\label{eq:pq strong}
    \left(\mathsf{pp}, \mathsf{pk}_A, \mathsf{pk}_B, \mathsf{SdK}(A, \mathsf{pk}_{A}, B,\mathsf{sk}_{B})\right) \approx_{sc}
    \left(\mathsf{pp}, \mathsf{pk}_A, \mathsf{pk}_B, K^*\right)
    \end{equation}
    where $\mathsf{pp} \leftarrow \mathsf{Stp}(1^{\lambda})$, $(\mathsf{sk}_A, \mathsf{pk}_A) \leftarrow \mathsf{Gen}(\mathsf{pp},A)$, $(\mathsf{sk}_B, \mathsf{pk}_B) \leftarrow \mathsf{Gen}(\mathsf{pp},B)$, and $K^*\leftarrow \mathsf{SKS}$.
\end{itemize}
\end{definition}

Post-quantum NIKE protocols can be constructed assuming the hardness of the standard learning with errors problem~\cite{GdKQ+24} or from computational problems in isogenies over elliptic curves~\cite{CLM+18}.
The stronger definition of computational indistinguishability used in \cref{eq:pq strong} (see \cref{def:computational indist} for the precise definition of $\approx_{sc}$) requires making concrete assumptions on the runtime of the best attacker against the underlying hard problem.
This is not unique to our settings and it is in fact required by essentially any application that considers concrete security estimates for the NIKE.
We refer the reader to~\cite{GdKQ+24,langrehr2023multi} for concrete bounds on lattice-based NIKE and to~\cite{practicalCSIDH} for isogeny-based schemes.

We remark that one can also consider a stronger definition of security \cite{CKS08}, where the adversary is given access to a key derivation oracle, for both honestly generated keys.
Since the above weaker definitions will suffice for us, we refrain from defining the stronger variant.

\section{Computational Monogamy of Entanglement}
In this section we propose and analyze a computational variant of the monogamy of entanglement game of~\cite{MOE}.

\subsection{Definition of Computational Monogamy-of-Entanglement Game}

We assume the existence of a distribution~$\mathcal{Z}$ on $\{0,1\}^{q(\lambda)} \times \{0,1\}^{n(\lambda)}$, parameterized by a security parameter~$\lambda$, where~$q(\lambda)$ and~$n(\lambda)$ are polynomially bounded, and~$n(\lambda) = \omega(\log^2\lambda)$.
The distribution should be samplable by a QPT algorithm, which we denote by $(p, \theta) \gets \mathcal{Z}(1^\lambda)$ and we require one of the following computational indistinguishability assumptions
(\cref{def:computational indist}):
\begin{equation}\label{eq:Z indist strong}
\left((p, \theta) : (p, \theta) \gets \mathcal{Z}(1^\lambda)\right) \approx_{sc}
\left((p, \theta^*) : (p, \cdot) \gets \mathcal{Z}(1^\lambda); \theta^* \gets \{0,1\}^n\right)
\end{equation}
or
\begin{equation}\label{eq:Z indist standard}
\left((p, \theta) : (p, \theta) \gets \mathcal{Z}(1^\lambda)\right) \approx_c
\left((p, \theta^*) : (p, \cdot) \gets \mathcal{Z}(1^\lambda); \theta^* \gets \{0,1\}^n\right).
\end{equation}
In the game that we are about to define, $p$ models public parameters revealed to the players before they have to agree on a joint quantum state, while the value~$\theta$ is only revealed afterwards.

\begin{definition}[Computational Monogamy-of-Entanglement Game]\label{def:game}
Given a distribution~$\mathcal Z$ as above, we define the following \emph{computational monogamy-of-entanglement game} between Alice and a pair of ``adversaries'' Bob and Charlie.
It is parametrized by a security parameter~$\lambda$ and consists~of~four~phases:
\begin{enumerate}
\item \emph{Sampling Phase:} Alice samples $(p, \theta) \gets \mathcal{Z}(1^\lambda)$ and reveals $p$ to Bob and Charlie.
\item \emph{Efficient Preparation Phase:} Alice, Bob, and Charlie jointly apply a QPT algorithm (which may depend~$p$ but not on~$\theta$) to create a shared quantum state between their registers $A$, $B$,~$C$.
Registers~$A$ and $B$ should consist of $n(\lambda)$ qubits, while $C$ can be arbitrary.
\item \emph{Question Phase:} Alice measures register~$A$ in the $\theta$-basis to obtain an outcome~$K_A$.
She then reveals $\theta$ to Bob and Charlie.
\item \emph{Semi-Honest Answer Phase:} Bob measures register~$B$ in the $\theta$-basis to obtain an outcome~$K_B$, while Charlie can apply an arbitrary (possibly inefficient) measurement of register~$C$ to obtain an outcome~$K_C$.
\end{enumerate}
The players win the game if~$K_A = K_B = K_C$.
\end{definition}

Thus a \emph{strategy} for the above game consists of a QPT algorithm that on input~$p$ outputs a state~$\rho_{ABC}^{(p)}$ (the result of the preparation phase), along with a family of (possibly inefficient) POVMs~$\{Q_C^{(k_E|p,\theta)}\}_{k_E}$ that correspond to Charlie's measurement for a given value of~$p$ and~$\theta$.
Without loss of generality we may assume that this POVM does not explicitly depend on~$p$, i.e.,~$Q_C^{(k_E|p,\theta)} = Q_C^{(k_E|\theta)}$ (indeed, $p$ can always be stored in~$C$ during the preparation phase).
Then the winning probability of the game is given by
\begin{align}\label{eq:pwin}
  \!\!\!p_\text{win} = \Pr(K_A\!=\!K_B\!=\!K_C) = \!\!\underset{(p, \theta) \gets \mathcal{Z}(1^\lambda)}{\mathsf{E}}\! \sum_k \Tr\left( ({}_\theta\!\ketbra{kk}_\theta \ot Q^{(k|\theta)}_C) \rho_{ABC}^{(p)} \right).
\end{align}

\subsection{Bound on the Min-Entropy and the Winning Probability}
We now analyze the winning probability of the above game.
We first prove a slightly stronger statement -- an explicit bound on the min-entropy of $K_A = K_B$ if the two keys agree (which is a necessary condition in order to win the game) given Charlie's quantum system -- and then deduce a bound on the winning probability as a corollary.

\begin{theorem}\label{thm:moe strong}
Let $\mathcal Z$ be any distribution satisfying \cref{eq:Z indist strong} with a negligible function~$\eta(\lambda)$.
For any QPT algorithm modeling the preparation phase, let us run the computational monogamy-of-entanglement game until right before Charlie's measurement.
If $K_A \neq K_B$, sample $K\gets\{0,1\}^{n(\lambda)}$ independently and uniformly at random, else set $K := K_A = K_B$.
Let~$\rho_{KC\Theta}$ denote the resulting cq-state describing the random variables~$K$ and~$\theta$ and Charlie's register~$C$.
Then, there exists $\lambda_0$ such that, for all~$\lambda\geq\lambda_0$,
\[
H_{\min}(K|C)_\rho \geq H_{\min}(K|C\Theta)_\rho \geq t(\lambda) := -\log ( \tilde O\mleft(2^{-\frac12\sqrt{n(\lambda)}}\mright) + \eta(\lambda) ). \]
In particular, $H_{\min}(K|C)_\rho = \omega(\log\lambda)$.
\end{theorem}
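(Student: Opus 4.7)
By \cref{thm:cq-min-entropy}, bounding $H_{\min}(K|C\Theta)_\rho$ from below is equivalent to bounding $p_{\text{guess}}(K|C\Theta)_\rho$ from above. Since $K$ is resampled uniformly in $\{0,1\}^n$ whenever $K_A \neq K_B$, I have $p_{\text{guess}}(K|C\Theta) \leq p_{\text{win}} + 2^{-n}$, where $p_{\text{win}} = \mathbb E_{(p,\theta) \gets \mathcal Z}\Tr(E^{(\theta)}\rho_{ABC}^{(p)})$ is the game's winning probability with $E^{(\theta)} = \sum_k {}_\theta\ketbra{kk}{kk}_\theta \otimes Q_C^{(k|\theta)}$ optimized over Charlie's POVM. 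The plan is to bound $p_{\text{win}}$ by introducing a fictitious binary projective measurement $\{M^{(0)}, M^{(1)}\}$ on $AB$, where $M^{(1)} = (I - \ketbra{\phi^+}{\phi^+}^{\otimes s})^{\otimes n/s}$ and $s = \lceil \sqrt n \rceil$. The decomposition $\rho^{1/2} = M^{(0)}\rho^{1/2} + M^{(1)}\rho^{1/2}$ combined with the triangle inequality and AM--GM applied to $\|(E^{(\theta)})^{1/2}\rho^{1/2}\|_{\mathrm{HS}}^2 = \Tr(E^{(\theta)}\rho)$ gives
\begin{align*}
p_{\text{win}} \;\leq\; 2\,\mathbb E_\theta\!\left[\Tr\!\left(E^{(\theta)}M^{(0)}\rho M^{(0)}\right) + \Tr\!\left(E^{(\theta)}M^{(1)}\rho M^{(1)}\right)\right].
\end{align*}

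For the $M^{(0)}$-branch I further refine via the telescoping identity $M^{(0)} = \sum_{j=1}^{n/s}R_j$ with orthogonal projectors $R_j := \Pi_j \prod_{j' < j}(I - \Pi_{j'})$, where $\Pi_j := \ketbra{\phi^+}{\phi^+}^{\otimes s}$ acts on the qubits of block~$j$ in $AB$. On the subnormalized state $R_j \rho R_j$, block $j$ is rigidly in $\ket{\phi^+}^{\otimes s}$ tensored with the rest of the system including $C$, so by \cref{lem:support} Alice's $\theta$-basis measurement of block $j$ produces a uniformly random $s$-bit string independent of Charlie's register. Consequently no POVM can guess those $s$ bits with probability better than $2^{-s}$, giving $\Tr(E^{(\theta)}R_j\rho R_j) \leq 2^{-s}\cdot\Tr(R_j\rho)$. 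Applying Cauchy--Schwarz to the cross terms $|\Tr(E^{(\theta)}R_j\rho R_{j'})|^2 \leq \Tr(E^{(\theta)}R_j\rho R_j)\,\Tr(E^{(\theta)}R_{j'}\rho R_{j'})$, then summing via $(\sum_j\sqrt{a_j})^2 \leq (n/s)\sum_j a_j$ together with $\sum_j\Tr(R_j\rho) \leq 1$, yields $\Tr(E^{(\theta)}M^{(0)}\rho M^{(0)}) \leq (n/s) \cdot 2^{-s}$.

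For the $M^{(1)}$-branch I use the trivial bound $E^{(\theta)} \leq P_\theta^{AB}\otimes I_C$ with $P_\theta^{AB} := \sum_x {}_\theta\ketbra{xx}{xx}_\theta$. Were $\theta^*$ sampled uniformly and independently of $p$, the factorization $\mathbb E_{\theta^*}P_{\theta^*} = (\mathbb E_{\theta_1}P_{\theta_1})^{\otimes n}$ would hold, and the single-pair operator $\mathbb E_{\theta_1}P_{\theta_1}$ has $\ket{\phi^+}$ as its unique top eigenvector (eigenvalue $1$; all other eigenvalues at most $\tfrac12$). Hence multiplying by $I - \ketbra{\phi^+}{\phi^+}^{\otimes s}$ on each block annihilates the top eigenvector and caps the per-block operator norm at $\tfrac12$, so $(\mathbb E_{\theta^*}P_{\theta^*})M^{(1)} \leq 2^{-n/s}I$ and hence $\mathbb E_{\theta^*}\Tr(P_{\theta^*}M^{(1)}\rho M^{(1)}) \leq 2^{-n/s}$. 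To transfer this bound to $\theta \sim \mathcal Z$, I appeal to~\eqref{eq:Z indist strong}: the quantity $\Tr(P_\theta M^{(1)}\rho M^{(1)})$ is evaluated by an efficient QPT procedure on input $(p,\theta)$ — the preparation is QPT by assumption, and measuring $M^{(1)}$ reduces to block-wise Bell-basis measurements on $s = \poly(\lambda)$ qubits — so strong computational indistinguishability yields $\mathbb E_\theta\Tr(P_\theta M^{(1)}\rho M^{(1)}) \leq 2^{-n/s} + \eta(\lambda)$ for all $\lambda \geq \lambda_0$. Combining with $s = \lceil\sqrt n\rceil$ produces $p_{\text{win}} \leq O(\sqrt n \cdot 2^{-\sqrt n}) + O(\eta(\lambda)) \leq \tilde O(2^{-\sqrt n/2}) + \eta(\lambda)$ (the $\sqrt n$ polynomial overhead is absorbed by weakening the exponent from $\sqrt n$ to $\sqrt n/2$), matching the claim after taking $-\log$; the outer $H_{\min}(K|C) \geq H_{\min}(K|C\Theta)$ is immediate since conditioning on more information cannot increase min-entropy. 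The hard part will be the operator-inequality bookkeeping: the fictitious projectors do not commute with the winning POVM, so the factor-$2$ AM--GM step and the Cauchy--Schwarz sweep over the off-diagonals $R_j\rho R_{j'}$ introduce polynomial overhead that only fits within the claimed bound after balancing $s \approx \sqrt n$; it is also crucial that the reduction distinguisher for the $M^{(1)}$-branch is genuinely QPT, which hinges on $s$ being polynomial so that the block-wise Bell-basis measurements remain efficient.
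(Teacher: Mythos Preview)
Your proof is correct and arrives at the same bound, but the mechanics differ from the paper's in two places. First, the paper does not use the AM--GM/triangle-inequality split $\rho^{1/2}=M^{(0)}\rho^{1/2}+M^{(1)}\rho^{1/2}$; instead it observes that $P_\theta^{AB}$ and $M^{(i)}$ commute (from \cref{lem:support}), so the cq-state $\rho_{KC}^{(p,\theta)}$ decomposes \emph{linearly} into three positive pieces (the $M^{(0)}$ part, the $M^{(1)}$ part, and the disagreement part) with no factor~$2$ and no sandwiching. Second, for the $M^{(0)}$ branch the paper does not telescope $M^{(0)}=\sum_j R_j$ and run an information-theoretic ``EPR block decoupled from $C$'' argument; it instead bounds the operator norm of $\sum_x {}_\theta\ketbra{xx}_\theta M^{(0)}\otimes Q^{(x)}_C$ directly (\cref{lem:fixedThetaTripleX}), reducing via Naimark to $\max_x\lVert {}_\theta\ketbra{xx}_\theta M^{(0)}\rVert$ and applying the operator union bound (\cref{lmm:IminusPitensor}) to get $\sqrt{(n/s)/2^s}$. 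Your Cauchy--Schwarz sweep over the $R_j\rho R_{j'}$ cross terms yields $(n/s)2^{-s}$, which is in fact \emph{tighter} than the paper's $\sqrt{(n/s)/2^s}$ (so your final bound is really $\tilde O(2^{-\sqrt n})$ rather than $\tilde O(2^{-\sqrt n/2})$, and you weaken unnecessarily at the end). The $M^{(1)}$ branch and the computational-indistinguishability reduction are essentially identical in both proofs. Your route is a bit heavier on inequality bookkeeping but makes the monogamy intuition (a clean EPR block is decoupled from Charlie) more explicit; the paper's route is shorter because it exploits commutativity up front and replaces your decoupling-plus-Cauchy--Schwarz argument by a single operator-norm estimate.
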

\begin{proof}
For notational simplicity we assume that $n(\lambda)$ is a square.
The first inequality is known as the data-processing inequality for the min-entropy and is easy to see in the cq case.
Thus we need only to prove the second inequality.
In view of \cref{thm:cq-min-entropy}, this means that we wish to prove that there exists $\lambda_0$ such that, for all $\lambda \geq \lambda_0$ and for every POVM $\{E_{C\Theta}^{(k)}\}$, we have
\begin{equation}\label{eq:goal}
  \Tr\left(\rho_{KC\Theta} \sum_k \ketbra{k}{k}_K \otimes E_{C\Theta}^{(k)}\right)
\leq \frac{1}{2^{t(\lambda)}}.
\end{equation}
Because the state~$\rho$ is classical on register~$\Theta$, we may assume that $E_{C\Theta}^{(k)} = \sum_\theta E_C^{(k|\theta)} \ot \ketbra\theta_\Theta$, where~$\{E_C^{(k|\theta)}\}_k$ is a POVM for every fixed value of~$\theta$.

Let $\rho_{ABC}^{(p)}$ denote the joint quantum state of Alice, Bob, and Charlie right before the question phase of the game, for a fixed value of~$p$, and let~$\rho_{KC}^{(p,\theta)}$ denote the cq-state defined as in the statement of the theorem, but for fixed values of~$p$ and~$\theta$.
Then,
$\rho_{KC\Theta} = \mathsf{E}_{(p, \theta) \gets \mathcal Z(1^\lambda)}
( \rho_{KC}^{(p, \theta)} \ot \ketbra\theta_\Theta )$,
so that
\begin{align}\label{eq:KC as exp}
    \Tr\left(\!\rho_{KC\Theta} \sum_k \ketbra{k}{k}_K \otimes E_{C\Theta}^{(k)}\!\right)
= \!\!\!\underset{(p, \theta) \gets \mathcal Z(1^\lambda)}{\mathsf{E}}
\Tr\left(\!\rho_{KC}^{(p,\theta)} \sum_k \ketbra{k}{k}_K \otimes E_C^{(k|\theta)}\!\right)
\end{align}
Moreover, we have
\begin{equation}\label{eq:two terms}
\begin{aligned}
  \rho_{KC}^{(p,\theta)}
&= \sum_k \ketbra{k}_K \ot \Tr_{AB}\left( ({}_\theta\!\ketbra{kk}_\theta \ot I_C) \rho_{ABC}^{(p)} \right) \\
&+ \tau_K \ot \Tr_{AB}\left(( \sum_{k_A \neq k_B} {}_\theta\!\ketbra{k_Ak_B}_\theta \ot I_C) \rho_{ABC}^{(p)} \right),
\end{aligned}
\end{equation}
where $\tau_K$ denotes the maximally mixed state on~$K$ and~$\ket{k_Ak_B}_\theta := \ket{k_A}_\theta\ket{k_B}_\theta$.
Choose any function~$s(\lambda)$ such that $s(\lambda) = \omega(\log\lambda)$ and ${n(\lambda)}/{s(\lambda)} = \omega(\log\lambda)$, with both~$s(\lambda)$ and~${n(\lambda)}/{s(\lambda)}$ integers.
Then we can define the projections
\begin{align}\label{eq:M proj}
    M_{AB}^{(1)} = \left(I -\ketbra{\phi^+}{\phi^+}^{\otimes s(\lambda)}\right)^{\otimes \frac{n(\lambda)}{s(\lambda)}}
\qquad\text{and}\qquad
    M_{AB}^{(0)} = I - M_{AB}^{(1)}
\end{align}
(where $\ket{\phi^+}$ is a single EPR pair shared between Alice and Bob), using which we further decompose the right-hand side of \cref{eq:two terms} into three terms:
\begin{equation}\label{eq:three terms}
\begin{aligned}
  \rho_{KC}^{(p,\theta)}
&= \sum_k \ketbra{k}_K \ot \Tr_{AB}\left( ({}_\theta\!\ketbra{kk}_\theta M_{AB}^{(0)} \ot I_C) \rho_{ABC}^{(p)} \right) \\
&+ \sum_k \ketbra{k}_K \ot \Tr_{AB}\left( ({}_\theta\!\ketbra{kk}_\theta M_{AB}^{(1)} \ot I_C) \rho_{ABC}^{(p)} \right) \\
&+ \tau_K \ot \Tr_{AB}\left(( \sum_{k_A \neq k_B} {}_\theta\!\ketbra{k_Ak_B}_\theta \ot I_C) \rho_{ABC}^{(p)} \right),
\end{aligned}
\end{equation}
Thus,
\begin{align*}
    \Tr\left(\rho_{KC}^{(p,\theta)} \sum_k \ketbra{k}{k}_K \otimes E_C^{(k|\theta)}\right)
&= \sum_k \Tr\left( ({}_\theta\!\ketbra{kk}_\theta M_{AB}^{(0)} \ot E_C^{(k|\theta)}) \rho_{ABC}^{(p)} \right) \\
&+ \sum_k \Tr\left( ({}_\theta\!\ketbra{kk}_\theta M_{AB}^{(1)} \ot E_C^{(k|\theta)}) \rho_{ABC}^{(p)} \right) \\
&+ \frac1{2^{n(\lambda)}} \Tr\left(( \sum_{k_A \neq k_B} {}_\theta\!\ketbra{k_Ak_B}_\theta \ot I_C) \rho_{ABC}^{(p)} \right) \\
&\!\!\!\!\!\!\!\!\!\!\!\!\!\!\!\!\!\!\!\!\leq \sqrt{\frac{n(\lambda) / s(\lambda)}{2^{s(\lambda)}}}
+ \sum_k  \Tr\left( {}_\theta\!\ketbra{kk}_\theta M_{AB}^{(1)} \rho_{AB}^{(p)} \right)
+ \frac1{2^{n(\lambda)}},
\end{align*}
where we bound the first term using \cref{lem:fixedThetaTripleX} below, for the middle term we use~$E_C^{(k|\theta)} \leq I_C$, and for the last term~$\sum_{k_A \neq k_B} {}_\theta\!\ketbra{k_Ak_B}_\theta \leq I_{AB}$.
Taking the expectation as in \cref{eq:KC as exp}, we find that
\begin{align}
\nonumber
    \Tr\left(\rho_{KC\Theta} \sum_k \ketbra{k}{k}_K \otimes E_{C\Theta}^{(k)}\right)
&= \mathsf{E}_{(p, \theta) \gets \mathcal Z(1^\lambda)} \left( \sum_k  \Tr\left( {}_\theta\!\ketbra{kk}_\theta M_{AB}^{(1)} \rho_{AB}^{(p)} \right) \right) \\
\label{eq:almost done}
&+ \sqrt{\frac{n(\lambda) / s(\lambda)}{2^{s(\lambda)}}} + \frac1{2^{n(\lambda)}},
\end{align}
In \cref{randomTheta} below we show that
\begin{align}\label{eq:ideal}
    \mathsf{E}_{(p, \cdot) \gets \mathcal Z(1^\lambda); \theta^* \leftarrow \{0,1\}^{n(\lambda)}} \left( \sum_k  \Tr\left( {}_{\theta^*}\!\ketbra{kk}_{\theta^*} M_{AB}^{(1)} \rho_{AB}^{(p)} \right) \right)
    \leq \frac1{2^{n(\lambda) / s(\lambda)}}.
\end{align}
We claim that the computational indistinguishability in \cref{eq:Z indist strong} implies that there exists $\lambda_0$, depending only on the QPT algorithm modeling the preparation phase, such that, for all~$\lambda \geq \lambda_0$,
\begin{equation}\label{eq:reduction}
\left\lvert
\begin{aligned}
&\mathsf{E}_{(p, \theta) \gets \mathcal Z(1^\lambda)} \left( \sum_k  \Tr\left( {}_\theta\!\ketbra{kk}_\theta M_{AB}^{(1)} \rho_{AB}^{(p)} \right) \right) \\
&- \mathsf{E}_{(p, \cdot) \gets \mathcal Z(1^\lambda); \theta^* \leftarrow \{0,1\}^{n(\lambda)}} \left( \sum_k  \Tr\left( {}_{\theta^*}\!\ketbra{kk}_{\theta^*} M_{AB}^{(1)} \rho_{AB}^{(p)} \right) \right)
\end{aligned} \right\rvert \leq \eta(\lambda).
\end{equation}
Indeed we can define a reduction as follows:
On input~$(p,\theta)$, simulate the efficient preparation phase (phase 2) to obtain the state~$\rho_{AB}^{(p)}$ of Alice and Bob's qubits.
Next, the apply the efficient projective measurement~$\{M_{AB,0},M_{AB,1}\}$ defined in \cref{eq:M proj}.
If the outcome is ``0'', output an arbitrary.
If the outcome is ``1'', measure Alice and Bob's qubits in the~$\theta$-basis and return ``1'' if and only if the measurement outcomes agree.
Note that the reduction so defined is efficient (the possibly inefficient POVM is not used in the reduction).
Moreover, the bias of this reduction is precisely  the left-hand side of \cref{eq:reduction}.%
\footnote{Note that, for every fixed $\theta$, the projections $\sum_k {}_{\theta}\!\ketbra{kk}_{\theta}$ and $M_{AB}^{(1)}$ commute. This follows from~\cref{lem:support}.}
Thus, \cref{eq:reduction} must hold, for otherwise we would obtain a contradiction to the computational indistinguishability assumption in \cref{eq:Z indist strong}, and $\lambda_0$ only depends on the preparation phase.
Combining \cref{eq:almost done,eq:ideal,eq:reduction}, and choosing $s(\lambda) = \sqrt{n(\lambda)}$, we obtain the upper bound~\eqref{eq:goal}:
we have, for all $\lambda \geq \lambda_0$,
\begin{align*}
  \Tr\left(\rho_{KC\Theta} \sum_k \ketbra{k}{k}_K \otimes E_{C\Theta}^{(k)}\right)
&\leq \frac1{2^{n(\lambda) / s(\lambda)}} + \eta(\lambda) + \sqrt{\frac{n(\lambda) / s(\lambda)}{2^{s(\lambda)}}} + \frac1{2^{n(\lambda)}} \leq 2^{-t(\lambda)}.
\end{align*}
Because $\lambda_0$ does not depend on the choice of POVM~$\{E_{C\Theta}^{(k)}\}$, it follows that
$p_{\text{guess}}(K|C\Theta)_\rho \leq 2^{-t(\lambda)}$.
Using \cref{thm:cq-min-entropy}, we conclude that
\begin{equation*}
H_{\min}(K|C\Theta)_\rho =  -\log p_{\text{guess}}(K|C\Theta)_\rho \geq t(\lambda). \qedhere
\end{equation*}
\end{proof}

\begin{corollary}\label{cor:negl strong}
Let $\mathcal Z$ be any distribution satisfying \cref{eq:Z indist strong} with a negligible function~$\eta(\lambda)$.
Then, for any strategy for the computational monogamy-of-entangle\-ment game, there is~$\lambda_0$ such that, for all $\lambda\geq\lambda_0$, the winning probability~is~bounded by
$\tilde O(2^{-\frac12\sqrt{n(\lambda)}}) +\eta(\lambda)$.
In particular, the winning probability~is~negligible.
\end{corollary}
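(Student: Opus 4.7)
The plan is to derive the corollary directly from \cref{thm:moe strong} by recognizing Charlie's guess $K_C$ as a POVM-based guess for the random variable $K$ defined in the theorem statement.

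First, I would unpack the winning event. By definition, $p_{\text{win}} = \Pr(K_A = K_B = K_C)$. Consider the random variable $K$ introduced in \cref{thm:moe strong}: on the event $\{K_A = K_B\}$ we have $K = K_A = K_B$, while on the complement $K$ is drawn uniformly and independently of everything else (in particular, independently of Charlie's measurement outcome $K_C$). Thus
\[
\Pr(K_A = K_B = K_C) = \Pr(K_A = K_B)\cdot\Pr(K = K_C \mid K_A = K_B) \leq \Pr(K = K_C).
\]

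Second, I would observe that Charlie's measurement, described by the POVM family $\{Q_C^{(k_C|\theta)}\}_{k_C}$, is exactly of the form admitted by \cref{thm:moe strong}: it depends only on his register~$C$ and on~$\theta$ (revealed in the question phase), and it produces a classical outcome $K_C$. Defining $E_{C\Theta}^{(k)} := \sum_\theta Q_C^{(k|\theta)} \otimes \ketbra{\theta}_\Theta$ yields a valid POVM on $C\Theta$, and $\Pr(K_C = K)$ is precisely the success probability of using $\{E_{C\Theta}^{(k)}\}_k$ to guess $K$ from $(C,\Theta)$. Hence $\Pr(K = K_C) \leq p_{\text{guess}}(K \mid C\Theta)_\rho$, and \cref{thm:cq-min-entropy} combined with \cref{thm:moe strong} gives, for all $\lambda \geq \lambda_0$,
\[
p_{\text{win}} \leq p_{\text{guess}}(K \mid C\Theta)_\rho \leq 2^{-t(\lambda)} = \tilde O\!\mleft(2^{-\frac12\sqrt{n(\lambda)}}\mright) + \eta(\lambda).
\]

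Finally, I would verify negligibility. By assumption $n(\lambda) = \omega(\log^2\lambda)$, so $\sqrt{n(\lambda)} = \omega(\log\lambda)$ and therefore $2^{-\frac12\sqrt{n(\lambda)}} = \lambda^{-\omega(1)}$; the polylogarithmic factors absorbed in $\tilde O(\cdot)$ do not affect negligibility. Since $\eta(\lambda)$ is negligible by hypothesis, so is their sum, which establishes the claim. There is no real obstacle here: the entire content is in \cref{thm:moe strong}, and the only thing to verify is that Charlie's strategy in the game fits the shape of the POVM appearing in the min-entropy bound, which it does by construction of the game.
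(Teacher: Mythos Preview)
Your proof is correct and follows essentially the same approach as the paper: both arguments define the POVM $E_{C\Theta}^{(k)} := \sum_\theta Q_C^{(k|\theta)} \otimes \ketbra{\theta}_\Theta$ and bound $p_{\text{win}}$ by the guessing probability $p_{\text{guess}}(K \mid C\Theta)_\rho$, then invoke \cref{thm:moe strong}. Your probabilistic inequality $\Pr(K_A = K_B = K_C) \leq \Pr(K = K_C)$ is a slightly cleaner way of expressing what the paper does by writing out the density operators explicitly, but the content is the same.
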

\begin{proof}
Let $\rho_{ABC}^{(p)}$ denote the joint quantum state of Alice, Bob, and Charlie right before the question phase of the game, for a fixed value of~$p$.
For a fixed value of~$p$ and~$\theta$, the joint state of the random variables~$K_A, K_B$ and Charlie's register~$C$ right before Charlie's measurement is given by
\begin{align*}
  \rho_{K_A K_B C}^{(p,\theta)} = \sum_{k_A, k_B} \ketbra{k_A k_B}_{K_AK_B} \ot \sigma_C^{(p,\theta,k_Ak_B)},
\end{align*}
where
\begin{align*}
   \sigma_C^{(p,\theta,k_Ak_B)} = \Tr_{AB}\left( ({}_\theta\!\ketbra{k_Ak_B}_\theta \ot I_C) \rho_{ABC}^{(p)} \right),
\end{align*}
with $\ket{k_Ak_B}_\theta := \ket{k_A}_\theta\ket{k_B}_\theta$.
Let $\{Q_C^{(k_E|\theta)}\}_{k_E}$ denote the POVM applied by Charlie in the answer phase for a given value of~$\theta$ (as discussed below \cref{def:game} we may assume without loss of generality that this POVM does not depend explicitly on~$p$).
Then the winning probability is given by \cref{eq:pwin}:
\begin{align*}
  p_\text{win}
&= \mathsf{E}_{(p, \theta) \gets \mathcal{Z}(1^\lambda)} \sum_k \Tr\left( ({}_\theta\!\ketbra{kk}_\theta \ot E^{(k|\theta)}_C) \rho_{ABC}^{(p)} \right) \\
&= \mathsf{E}_{(p, \theta) \gets \mathcal{Z}(1^\lambda)} \sum_k \Tr\left( \sigma_C^{(p,\theta,kk)} E^{(k|\theta)}_C \right).
\end{align*}
On the other hand, the state~$\rho_{KC\Theta}$ in the statement of \cref{thm:moe strong} is given by
\begin{align*}
  \rho_{KC\Theta} = \mathsf{E}_{(p, \theta) \gets \mathcal{Z}(1^\lambda)} \left( \sum_k \ketbra k_K \ot \sigma_C^{(p,\theta,kk)} \ot \ketbra\theta_\Theta + \tau_K \ot \sum_{k_A \neq k_B} \sigma_C^{(p,\theta,k_Ak_B)} \ot \ketbra\theta_\Theta \right),
\end{align*}
where $\tau_K$ is the maximally mixed state.
Defining the POVM~$E_{C\Theta}^{(k)} := \sum_\theta E_C^{(k|\theta)} \ot \ketbra\theta_\Theta$, we see that
\begin{align*}
    p_\text{win}
\leq \tr\left( \rho_{KC\Theta} ( \sum_k \ketbra k_K \ot E_{C\Theta}^{(k)} ) \right)
\leq p_\text{guess}(K|C\Theta)_\rho
= 2^{-H_{\min}(K|C\Theta)_\rho},
\end{align*}
where the last step is due to \cref{thm:cq-min-entropy}.
Thus \cref{thm:moe strong} implies the claim.
\end{proof}

For the standard notion of computational indistinguishability, an easy adaption of these proofs yield the following variants of \cref{thm:moe strong,cor:negl strong}.

\begin{theorem}\label{thm:moe standard}
Let $\mathcal Z$ be any parameterized distribution satisfying \cref{eq:Z indist standard}.
For any QPT algorithm modeling the preparation phase, let us run the computational monogamy-of-entanglement game until right before Charlie's measurement.
If $K_A \neq K_B$, sample $K\gets\{0,1\}^{n(\lambda)}$ independently and uniformly at random, else set $K := K_A = K_B$.
Let~$\rho_{KC\Theta}$ denote the resulting cq-state describing the random variables~$K$ and~$\theta$ and Charlie's register~$C$.
Then there exists a function~$t(\lambda) = \omega(\log\lambda)$ such that the following holds for all~$\lambda$:
\[
H_{\min}(K|C)_\rho \geq H_{\min}(K|C\Theta)_\rho \geq t(\lambda).
\]
\end{theorem}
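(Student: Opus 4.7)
The plan is to mirror the proof of \cref{thm:moe strong} essentially verbatim, adjusting only the step where the computational indistinguishability assumption is invoked. I would fix an arbitrary POVM $\{E_{C\Theta}^{(k)} = \sum_\theta E_C^{(k|\theta)} \otimes \ketbra{\theta}_\Theta\}$ and aim to upper bound $\Tr( \rho_{KC\Theta} \sum_k \ketbra{k}_K \otimes E_{C\Theta}^{(k)} )$. As before, I would introduce the block projections $M_{AB}^{(0)}, M_{AB}^{(1)}$ from \cref{eq:M proj} with block size $s(\lambda) = \sqrt{n(\lambda)}$, apply the three-term decomposition of $\rho_{KC}^{(p,\theta)}$ from \cref{eq:three terms}, and bound each term using \cref{lem:fixedThetaTripleX}, the trivial bound $E_C^{(k|\theta)} \leq I_C$, and $\sum_{k_A \neq k_B} {}_\theta\!\ketbra{k_Ak_B}_\theta \leq I_{AB}$.

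The crucial step is the reduction to indistinguishability. I would reuse exactly the same reduction as in the proof of \cref{thm:moe strong}: on input $(p,\theta)$, simulate the preparation phase, apply the efficient binary measurement $\{M_{AB}^{(0)}, M_{AB}^{(1)}\}$, and on outcome $1$ measure $A$ and $B$ in the $\theta$-basis, returning $1$ iff the two outcomes agree. The key observation, which is precisely what makes the argument go through under the standard notion \cref{eq:Z indist standard} rather than the strong notion \cref{eq:Z indist strong}, is that this reduction is a single fixed QPT algorithm depending only on the preparation phase of the strategy, and not on Charlie's POVM. Standard computational indistinguishability therefore supplies a negligible function $\eta(\lambda)$, depending only on the preparation phase, bounding the bias of this reduction for all $\lambda$.

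Combining the pieces with $s(\lambda) = \sqrt{n(\lambda)}$ yields
\[
\Tr\!\left( \rho_{KC\Theta} \sum_k \ketbra{k}_K \otimes E_{C\Theta}^{(k)} \right) \leq \tilde O\!\left( 2^{-\tfrac12 \sqrt{n(\lambda)}} \right) + \eta(\lambda) =: 2^{-t(\lambda)},
\]
uniformly over all POVMs $\{E_{C\Theta}^{(k)}\}$ (this uniformity is exactly what would fail if $\eta$ depended on the POVM). Invoking \cref{thm:cq-min-entropy} and the data-processing inequality for cq states yields $H_{\min}(K|C)_\rho \geq H_{\min}(K|C\Theta)_\rho \geq t(\lambda)$. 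Finally, because $n(\lambda) = \omega(\log^2 \lambda)$ forces $2^{-\tfrac12 \sqrt{n(\lambda)}}$ to be negligible and $\eta$ is negligible by assumption, their sum is negligible and hence $t(\lambda) = \omega(\log \lambda)$, as required.

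I do not expect any genuine technical obstacle: the entire point of this variant is that the reduction in the earlier proof was already oblivious to Charlie's measurement, so swapping strong for standard indistinguishability costs nothing beyond giving up an explicit quantitative rate for $t(\lambda)$. The only care needed is to present the reduction before fixing the POVM, so that the negligible $\eta$ can be factored out of the supremum over POVMs defining $p_{\text{guess}}(K|C\Theta)_\rho$.
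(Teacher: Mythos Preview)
Your proposal is correct and is essentially identical to the paper's own argument: the paper simply states that the proof proceeds as for \cref{thm:moe strong}, with the only difference being that the negligible function~$\eta(\lambda)$ in \cref{eq:reduction} may now depend on the preparation phase rather than just on the indistinguishability assumption. You have correctly identified the key point---that the reduction is a single QPT distinguisher independent of Charlie's POVM, so the negligible~$\eta$ supplied by \cref{eq:Z indist standard} is uniform over all POVMs and can be pulled outside the supremum defining~$p_{\text{guess}}$.
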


The proof proceeds as the one of \cref{thm:moe strong} -- the only difference is that the negligible function~$\eta(\lambda)$ in \cref{eq:reduction} may now depend on the preparation phase, rather than just on the computational indistinguishability assumption.

\begin{corollary}\label{cor:negl standard}
Let $\mathcal Z$ be any parameterized distribution satisfying \cref{eq:Z indist standard}.
Then, for any strategy for the computational monogamy-of-entanglement game, the winning probability is a negligible function of~$\lambda$.
\end{corollary}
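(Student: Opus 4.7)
The plan is to reduce the statement directly to \cref{thm:moe standard} using exactly the same argument as in the proof of \cref{cor:negl strong}. The entire substantive work has already been carried out in \cref{thm:moe standard}; what remains is to re-interpret the winning probability as a guessing probability and invoke the min-entropy bound.

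Concretely, I would first fix a strategy, denoting by $\rho_{ABC}^{(p)}$ the joint state produced at the end of the preparation phase and by $\{Q_C^{(k_E\mid\theta)}\}_{k_E}$ Charlie's POVM (without loss of generality, independent of $p$, as noted below \cref{def:game}). Starting from the expression \cref{eq:pwin} for $p_\text{win}$, and rewriting it in terms of the subnormalized operators
\[
\sigma_C^{(p,\theta,k_A k_B)} = \Tr_{AB}\bigl( ({}_\theta\!\ketbra{k_A k_B}_\theta \ot I_C)\,\rho_{ABC}^{(p)} \bigr),
\]
one obtains
\[
p_\text{win} = \mathsf{E}_{(p,\theta)\gets\mathcal Z(1^\lambda)} \sum_k \Tr\bigl( \sigma_C^{(p,\theta,kk)}\, Q_C^{(k\mid\theta)} \bigr),
\]
since winning in particular requires $K_A = K_B$, so only the diagonal terms contribute.

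Next, I would recognize the right-hand side as at most the guessing probability of $K$ given $C\Theta$ in the cq state $\rho_{KC\Theta}$ defined in the statement of \cref{thm:moe standard}: setting $E_{C\Theta}^{(k)} := \sum_\theta Q_C^{(k\mid\theta)} \ot \ketbra\theta_\Theta$, one verifies that
\[
p_\text{win} \leq \Tr\Bigl( \rho_{KC\Theta}\, \sum_k \ketbra k_K \ot E_{C\Theta}^{(k)} \Bigr) \leq p_\text{guess}(K\mid C\Theta)_\rho,
\]
where the first inequality comes from the fact that the off-diagonal ($K_A \neq K_B$) contribution to $\rho_{KC\Theta}$ only enlarges the trace (its $K$-marginal is $\tau_K$, which is nonnegative). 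Applying \cref{thm:cq-min-entropy} and then \cref{thm:moe standard} gives $p_\text{win} \leq 2^{-H_{\min}(K\mid C\Theta)_\rho} \leq 2^{-t(\lambda)}$ for a function $t(\lambda) = \omega(\log \lambda)$ that depends only on the fixed strategy. By definition of negligibility, $2^{-\omega(\log\lambda)}$ is negligible, which completes the proof.

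There is no real obstacle here: the delicate step — converting the computational indistinguishability in \cref{eq:Z indist standard} into a min-entropy bound without explicit quantitative control, despite Charlie being computationally unbounded — has already been absorbed into \cref{thm:moe standard}. The only thing to keep in mind is that the function $t(\lambda)$ provided by \cref{thm:moe standard} depends on the strategy (because the negligible function arising from the reduction in that proof does), but this dependence is harmless here, since negligibility of $p_\text{win}$ is asserted for each fixed strategy separately.
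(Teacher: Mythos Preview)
Your proposal is correct and follows exactly the paper's approach: the paper states \cref{cor:negl standard} without a separate proof, indicating it is the ``easy adaptation'' of the proof of \cref{cor:negl strong} obtained by invoking \cref{thm:moe standard} in place of \cref{thm:moe strong}. Your observation that the strategy-dependent $t(\lambda)$ is harmless here (since negligibility is asserted per strategy) is precisely the point distinguishing this corollary from \cref{cor:negl strong}.
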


\subsection{Technical Lemmas}
We now state and prove the technical lemmas used in the proof of \cref{thm:moe strong,thm:moe standard}.

\begin{lemma}\label{lem:fixedThetaTripleX}
 Let $\rho_{ABE}$ be a quantum state, where~$A$ and $B$ are $n$-qubit registers, let~$\{ Q^{(x)}_E \}_{x \in \{0,1\}^n}$ be a POVM, and let~$s$ be a divisor of~$n$.
 Then the following holds for any fixed~$\theta \in \{0,1\}^n$:
    \[ \Tr\left( \left( \sum_{x\in \{0,1\}^n}{}_\theta\!\ketbra{xx}_{\theta} \otimes Q^{(x)}_E \right) \left( I - \left(I-\ketbra{\phi^{+}}{\phi^{+}}^{\otimes s}\right)^{\otimes (n/s)}_{AB} \right) \rho_{ABE} \right) \leq \sqrt{\frac{n/s}{2^s}},
  \]
  where $\ket{xx}_\theta := \ket x_\theta \ket x_\theta$.
\end{lemma}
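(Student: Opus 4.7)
The plan is to apply Cauchy--Schwarz twice, combined with an orthogonal decomposition of the projection $M := I - (I - \ketbra{\phi^+}{\phi^+}^{\otimes s})^{\otimes t}$ on $AB$, where $t := n/s$. For the fixed $\theta$, I set $\Pi := \sum_x {}_\theta\!\ketbra{xx}_\theta \otimes Q^{(x)}_E$ and observe that $0 \leq \Pi \leq I$, because $\{Q^{(x)}\}$ is a POVM and the projections ${}_\theta\!\ketbra{xx}_\theta$ are pairwise orthogonal. The goal becomes $\lvert\Tr[\Pi M \rho]\rvert \leq \sqrt{t/2^s}$.

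First, I would decompose $M = \sum_{i=1}^t E_i$ into \emph{pairwise orthogonal} projections, where $E_i$ corresponds to block~$i$ being the first block equal to $\ket{\phi^+}^{\otimes s}$:
\[
E_i := \bigotimes_{j<i} (I - \ketbra{\phi^+}{\phi^+}^{\otimes s})_{(j)}
\otimes \ketbra{\phi^+}{\phi^+}^{\otimes s}_{(i)}
\otimes I_{(>i)}.
\]
The orthogonality $E_i E_j = 0$ for $i<j$ is immediate from the action on block $i$. Expanding $\Tr[\Pi M \rho] = \sum_i \Tr[\Pi E_i \rho]$ and applying Cauchy--Schwarz in the Hilbert--Schmidt inner product via
$\Tr[\Pi E_i \rho] = \langle \Pi^{1/2} \rho^{1/2}, \Pi^{1/2} E_i \rho^{1/2}\rangle_{HS}$, I obtain
\[
\lvert\Tr[\Pi E_i \rho]\rvert^2 \leq \Tr[\Pi \rho] \cdot \Tr[E_i \Pi E_i \rho] \leq \Tr[E_i \Pi E_i \rho].
\]

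The key structural step is then to show $E_i \Pi E_i \leq \frac{1}{2^s} E_i$. Writing $E_i = \tilde E_i P_i$ with $P_i := \ketbra{\phi^+}{\phi^+}^{\otimes s}_{(i)}$ and $\tilde E_i$ trivial on block $i$, \cref{lem:support} applied per qubit yields $\ket{\phi^+}^{\otimes s} = 2^{-s/2} \sum_y \ket{yy}_{\theta_i}$ (for $\theta_i$ the restriction of $\theta$ to block $i$), so that $P_i \cdot {}_{\theta_i}\!\ketbra{x_i x_i}_{\theta_i} \cdot P_i = \frac{1}{2^s} P_i$. Collecting factors gives $P_i \Pi P_i = \frac{1}{2^s} P_i \otimes \Pi^{(-i)}$, where $\Pi^{(-i)}$ has the same form as $\Pi$ on the remaining systems (with the induced POVM $\{\sum_{x_i} Q^{(x_i, x_{\backslash i})}\}_{x_{\backslash i}}$) and hence $\Pi^{(-i)} \leq I$. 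Sandwiching by $\tilde E_i$ concludes $E_i \Pi E_i \leq \frac{1}{2^s} E_i$, giving the per-term bound $\lvert\Tr[\Pi E_i \rho]\rvert \leq \sqrt{\Tr[E_i \rho]/2^s}$.

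Finally, the triangle inequality and a second Cauchy--Schwarz over $i$ yield
\[
\lvert\Tr[\Pi M \rho]\rvert
\leq \frac{1}{\sqrt{2^s}} \sum_{i=1}^t \sqrt{\Tr[E_i \rho]}
\leq \frac{1}{\sqrt{2^s}} \sqrt{t \cdot \Tr[M \rho]}
\leq \sqrt{\frac{t}{2^s}},
\]
where orthogonality enters in the form $\sum_i \Tr[E_i \rho] = \Tr[M \rho] \leq 1$. The main obstacle is identifying the right orthogonal decomposition of $M$: the naive operator union bound $M \leq \sum_i \ketbra{\phi^+}{\phi^+}^{\otimes s}_{(i)}$ from \cref{lmm:IminusPitensor}, combined with the same Cauchy--Schwarz, would only give $\sum_i \Tr[\ketbra{\phi^+}{\phi^+}^{\otimes s}_{(i)} \rho] \leq t$ rather than $\leq 1$, losing a factor $\sqrt t$ and yielding only the weaker bound $t/\sqrt{2^s}$.
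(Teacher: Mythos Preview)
Your proof is correct, but it takes a genuinely different route from the paper's. The paper first applies Naimark's dilation to replace the POVM $\{Q^{(x)}_E\}$ by standard-basis projectors $\ketbra{x}_E$, so that the operator $\sum_x {}_\theta\!\ketbra{xx}_\theta M \otimes \ketbra{x}_E$ becomes block-diagonal in~$E$. Its operator norm is therefore $\max_x \lVert M_x\rVert$ with $M_x := {}_\theta\!\ketbra{xx}_\theta M$, and since $M$ is a projection one has $\lVert M_x\rVert^2 = {}_\theta\!\bra{xx} M \ket{xx}_\theta$. At this point the paper applies the \emph{non-orthogonal} operator union bound (\cref{lmm:IminusPitensor}), $M \leq \sum_i \ketbra{\phi^+}^{\otimes s}_{(i)}$, directly to this scalar and obtains ${}_\theta\!\bra{xx} M \ket{xx}_\theta \leq t/2^s$, hence the desired $\sqrt{t/2^s}$. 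So the very bound you flag as ``losing a factor $\sqrt t$'' is exactly what the paper uses---but it is applied \emph{after} the squaring step $\lVert M_x M_x^\dagger\rVert$, where the non-orthogonality no longer costs anything.

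Your approach instead avoids Naimark and the operator-norm reduction entirely, working directly with the POVM via two Cauchy--Schwarz steps. To make that work you need the sharper \emph{orthogonal} decomposition $M = \sum_i E_i$ (first-EPR-block projectors), together with the pointwise estimate $E_i \Pi E_i \leq 2^{-s} E_i$. Both arguments ultimately pivot on the same overlap computation $\lvert\langle\phi^+\rvert^{\otimes s}\ket{x_i x_i}_{\theta_i}\rvert^2 = 2^{-s}$, but the paper's route is a bit shorter and more structural (one norm bound), while yours is more elementary in that it never leaves the trace and never invokes a dilation.
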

\begin{proof}
By Naimark's theorem, any POVM~$\{Q^{(x)}_E\}$ can be implemented by an isometry~$V_{E \to E'F}$, where $E'$ is an $n$-qubit system and~$F$ another quantum system, followed by a measurement of~$E'$ in the standard basis.
Thus we may assume without loss of generality that~$E$ is an $n$-qubit register and $Q^{(x)}_E = \ketbra x_E$.
To prove the claim, it suffices to bound the operator norm of
    \begin{align*}
        \sum_{x\in \{0,1\}^n}{}_\theta\!\ketbra{xx}_{\theta} \left( I - \left(I-\ketbra{\phi^{+}}{\phi^{+}}^{\otimes s}\right)^{\otimes (n/s)}_{AB} \right) \otimes \ketbra x_E
    \end{align*}
    Because this is an operator controlled on $E$, this norm is the maximum operator norm of
    \begin{align*}
        M_x := {}_\theta\!\ketbra{xx}_{\theta} \left( I - \left(I-\ketbra{\phi^{+}}{\phi^{+}}^{\otimes s}\right)^{\otimes (n/s)}_{AB} \right)
    \end{align*}
    for $x\in\{0,1\}^n$.
    By \cref{lmm:IminusPitensor}, we have
    \begin{equation*}
        \begin{aligned}
            P &:= \left( I - \left(I-\ketbra{\phi^{+}}{\phi^{+}}^{\otimes s}\right)^{\otimes (n/s)}_{AB} \right)\\
        &\leq
        \ketbra{\phi^{+}}{\phi^{+}}^{\otimes s} \otimes I \otimes \dots \otimes I +
        \dots +
        I \otimes \dots \otimes I \otimes \ketbra{\phi^{+}}{\phi^{+}}^{\otimes s}.
        \end{aligned}
    \end{equation*}
    Using that $P$ is a projection, we have
    \begin{align*}
        \norm{M_x}^2
    = \norm{M_x M_x^\dagger}
    = {}_\theta\!\bra{xx} P \ket{xx}_{\theta}
    \leq \sum_{j=1}^{n/s} {}_{\theta_j}\!\bra{x_jx_j} \ketbra{\phi^{+}}{\phi^{+}}^{\otimes s} \ket{x_jx_j}_{\theta_j}
    = \frac{n/s}{2^s},
    \end{align*}
    where $x_j, \theta_j \in \{0,1\}^s$ denotes the $j$-th substring of~$x$ and~$\theta$, respectively, of length~$s$.
\end{proof}

\begin{lemma}\label{lmm:proj}
We have:
\begin{align}
    \mathsf{E}_{\theta \leftarrow \{0,1\}^n} \sum_{x\in \{0,1\}^n} {}_\theta\!\ketbra{x x}_\theta
&\equiv
\left( \mathsf{E}_{\theta \leftarrow \{0,1\}} \sum_{x\in \{0,1\}} {}_\theta\!\ketbra{x x}_\theta \right)^{\otimes n} \label{eq:average-theta}\\
&=
\left(\ketbra{\phi^{+}}{\phi^{+}} + \frac{1}{2}\ketbra{\phi^{-}}{\phi^{-}} + \frac{1}{2}\ketbra{\psi^{+}}{\psi^{+}}\right)^{\otimes n}, \nonumber
\end{align}
where $\ket{xx}_\theta := \ket x_\theta \ket x_\theta$ and we use $\equiv$ to indicate that the equality holds up to the natural reordering of the systems.
\end{lemma}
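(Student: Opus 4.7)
The proof splits naturally into two steps corresponding to the two equalities.

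For the first equality (up to reordering), the plan is to observe that both the summation over $x \in \{0,1\}^n$ and the expectation over $\theta \in \{0,1\}^n$ factor as products of single-qubit operations. Concretely, since $\ket{x}_\theta = \ket{x_1}_{\theta_1} \otimes \cdots \otimes \ket{x_n}_{\theta_n}$, we have $\ket{xx}_\theta = \bigotimes_{j=1}^n \ket{x_j}_{\theta_j} \ket{x_j}_{\theta_j}$. After a natural reordering that groups the $j$-th qubit of register $A$ with the $j$-th qubit of register $B$, the joint operator becomes a tensor product over~$j$, and independent sampling of $\theta_j$ and summation over $x_j$ yields the tensor power of the single-qubit expression. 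This is where the $\equiv$ symbol in the statement comes in: the equality holds after the reindexing $A_1 B_1 A_2 B_2 \cdots A_n B_n$ rather than $A_1 \cdots A_n B_1 \cdots B_n$.

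For the second equality, I would just directly evaluate the single-qubit quantity
\[
    \mathsf{E}_{\theta \leftarrow \{0,1\}} \sum_{x \in \{0,1\}} {}_\theta\!\ketbra{xx}_\theta
= \tfrac12 \bigl(\ketbra{00} + \ketbra{11} + \ketbra{++}{++} + \ketbra{--}{--}\bigr),
\]
and then change of basis into the Bell basis. Using the identities $\ket{00} = \tfrac{1}{\sqrt2}(\ket{\phi^+}+\ket{\phi^-})$ and $\ket{11} = \tfrac{1}{\sqrt2}(\ket{\phi^+}-\ket{\phi^-})$, the cross terms cancel and one obtains $\ketbra{00}+\ketbra{11} = \ketbra{\phi^+}{\phi^+}+\ketbra{\phi^-}{\phi^-}$. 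Analogously, $\ket{++} = \tfrac{1}{\sqrt2}(\ket{\phi^+}+\ket{\psi^+})$ and $\ket{--} = \tfrac{1}{\sqrt2}(\ket{\phi^+}-\ket{\psi^+})$, so that $\ketbra{++}{++}+\ketbra{--}{--} = \ketbra{\phi^+}{\phi^+}+\ketbra{\psi^+}{\psi^+}$. Adding these and multiplying by $\tfrac12$ produces precisely $\ketbra{\phi^+}{\phi^+} + \tfrac12\ketbra{\phi^-}{\phi^-} + \tfrac12\ketbra{\psi^+}{\psi^+}$, as claimed.

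There is no real obstacle here: the only thing to be careful about is the bookkeeping of the register ordering in step one (which is precisely what the $\equiv$ notation absorbs), and the elementary basis-change identities in step two. Taking the $n$-fold tensor power of the single-qubit result then yields the full identity.
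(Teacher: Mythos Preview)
Your proposal is correct and follows essentially the same approach as the paper: reduce to the $n=1$ case by factorization (handling the register reordering via~$\equiv$), then compute the single-qubit average explicitly and take the tensor power. The only cosmetic difference is in the $n=1$ computation: the paper expands each rank-one projector via the Pauli identities $\ketbra{0}=(I+Z)/2$, etc., obtains $\tfrac14(2I\otimes I + Z\otimes Z + X\otimes X)$, and then reads off the Bell-basis coefficients from the known eigenvalues of $Z\otimes Z$ and $X\otimes X$ on the Bell states, whereas you change basis directly using $\ket{00}=\tfrac{1}{\sqrt2}(\ket{\phi^+}+\ket{\phi^-})$ and its analogues; both routes are equally elementary.
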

\begin{proof}
We prove the lemma for $n=1$, and the general result follows since
both the left-hand side and the right-hand side of \cref{eq:average-theta} are the $n$-th tensor power of it.
To this end we use:
\begin{align*}
    \ketbra{0}{0} = \frac{I+Z}{2}, && \ketbra{1}{1} = \frac{I-Z}{2}, && \ketbra{+}{+} = \frac{I+X}{2}, && \ketbra{-}{-} = \frac{I-X}{2}.
\end{align*}
Thus:
\begin{align*}
    \mathsf{E}_{\theta \leftarrow \{0,1\}} \sum_{x\in \{0,1\}} {}_\theta\!\ketbra{xx}_\theta
&=  \frac12 \left( \ketbra{00} + \ketbra{11} + \ketbra{++} + \ketbra{--} \right) \\
&=  \frac18 \left( (I + Z)^{\otimes 2} + (I - Z)^{\otimes 2} + (I + X)^{\otimes 2} + (I - X)^{\otimes 2} \right) \\
&=  \frac14 \left( I \otimes I + Z \otimes Z + I \otimes I + X \otimes X \right) \\
&=  \ketbra{\phi^+} + \frac12\ketbra{\phi^-} + \frac12 \ketbra{\psi^+},
\end{align*}
In the last step we used that
\begin{align*}
  \frac12 \left( I \otimes I + Z \otimes Z \right)
= \ketbra{\phi^+} + \ketbra{\phi^-}, \\
  \frac12 \left( I \otimes I + X \otimes X \right)
= \ketbra{\phi^+} + \ketbra{\psi^+},
\end{align*}
which holds because $Z \otimes Z$ acts by $+1$ on the Bell states $\ket{\phi^\pm}$ and by $-1$ on the Bell states~$\ket{\psi^\pm}$, and similarly for $X \otimes X$.
This concludes the proof.
\end{proof}

\begin{lemma}\label{randomTheta}
 Let $\rho_{AB}$ be a quantum state on $n$-qubit registers $A$ and $B$, and let~$s$ be a divisor of~$n$.
 Then:
  \[ \mathsf{E}_{\theta\leftarrow \{0,1\}^n} \Tr\left(\sum_{x\in \{0,1\}^n}{}_\theta\!\ketbra{xx}{xx}_{\theta} \left(I-\ketbra{\phi^{+}}{\phi^{+}}^{\otimes s}\right)^{\otimes (n/s)}\rho_{AB}\right) \leq \frac{1}{2^{n/s}},
  \]
  where $\ket{xx}_\theta := \ket x_\theta \ket x_\theta$.
\end{lemma}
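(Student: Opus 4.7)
The plan is to reduce the statement to an operator-norm bound on a PSD operator. Starting from the left-hand side, I would pull the expectation inside the trace and apply \cref{lmm:proj} to rewrite the $\theta$-average of $\sum_x {}_\theta\!\ketbra{xx}_\theta$ as $M^{\otimes n}$, where $M := \ketbra{\phi^+}{\phi^+} + \tfrac12\ketbra{\phi^-}{\phi^-} + \tfrac12\ketbra{\psi^+}{\psi^+}$ acts on a single $A$-qubit paired with a single $B$-qubit. Up to the natural reordering identified in \cref{lmm:proj} (which leaves the trace invariant), the quantity to bound becomes
\[\Tr\left( M^{\otimes n} \left(I - \ketbra{\phi^+}{\phi^+}^{\otimes s}\right)^{\otimes(n/s)} \rho_{AB}\right),\]
with both operators respecting the partition of the $n$ qubit pairs into $n/s$ blocks of size~$s$.

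Next I would factor this blockwise. Writing $M^{\otimes n} = (M^{\otimes s})^{\otimes(n/s)}$ and using that $\ket{\phi^+}$ is a unit eigenvector of $M$ with eigenvalue one, and hence $M^{\otimes s}\ketbra{\phi^+}{\phi^+}^{\otimes s} = \ketbra{\phi^+}{\phi^+}^{\otimes s}$, each block contributes the factor
\[M^{\otimes s}\left(I - \ketbra{\phi^+}{\phi^+}^{\otimes s}\right) = M^{\otimes s} - \ketbra{\phi^+}{\phi^+}^{\otimes s},\]
so the full operator is the tensor product over the $n/s$ blocks of $M^{\otimes s} - \ketbra{\phi^+}{\phi^+}^{\otimes s}$.

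The key step is then the spectral bound $\|M^{\otimes s} - \ketbra{\phi^+}{\phi^+}^{\otimes s}\| \leq \tfrac12$. Since $M$ is diagonal in the Bell basis with eigenvalues $1,\tfrac12,\tfrac12,0$, the operator $M^{\otimes s}$ is also diagonal in the Bell basis, with eigenvalue $1$ on $\ket{\phi^+}^{\otimes s}$ and eigenvalue at most $\tfrac12$ on every orthogonal Bell-basis vector (any such vector has at least one factor different from $\ket{\phi^+}$, which contributes a factor of at most $\tfrac12$). Subtracting $\ketbra{\phi^+}{\phi^+}^{\otimes s}$ removes the eigenvalue one and yields a PSD operator bounded above by $\tfrac12 I$. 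Taking the tensor product over the $n/s$ blocks then gives a PSD operator of operator norm at most $2^{-n/s}$, and the lemma follows from $\Tr(A\rho_{AB}) \leq \|A\|_\infty$ for any PSD~$A$ and any state~$\rho_{AB}$.

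I do not anticipate any real obstacle here: the core step (the Bell-basis spectral bound on $M^{\otimes s}$) is elementary, and the rest is bookkeeping of tensor factors. The one thing to handle carefully is the implicit reordering that identifies the $i$-th qubit of $A$ with the $i$-th qubit of $B$ into pairs, but since the trace is invariant under such reorderings this is purely cosmetic.
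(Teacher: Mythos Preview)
Your proposal is correct and follows essentially the same approach as the paper: both pull the expectation inside, invoke \cref{lmm:proj} to obtain the Bell-diagonal operator~$M^{\otimes n}$, factor blockwise to $(M^{\otimes s}-\ketbra{\phi^+}{\phi^+}^{\otimes s})^{\otimes(n/s)}$, and bound each block by $\tfrac12 I$. The only cosmetic difference is that the paper establishes the per-block bound by expanding $(P_0+\tfrac12 P_1)^{\otimes s}-P_0^{\otimes s}$ as a sum over nonzero $x\in\{0,1\}^s$ and observing each term carries a factor $2^{-|x|}\leq\tfrac12$, whereas you phrase the same fact in terms of the eigenvalues of~$M^{\otimes s}$ in the Bell basis.
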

\begin{proof}
By \cref{lmm:proj}, we can rewrite
\[
\mathsf{E}_{\theta \in \{0,1\}^n} \sum_{x\in \{0,1\}^n}{}_\theta\!\ketbra{xx}{xx}_{\theta} =
\left(\ketbra{\phi^{+}}{\phi^{+}} + \frac{1}{2}\ketbra{\phi^{-}}{\phi^{-}} + \frac{1}{2}\ketbra{\psi^{+}}{\psi^{+}}\right)^{\otimes n},
\]
and therefore
 \begin{align*}
&\mathsf{E}_{\theta\leftarrow \{0,1\}^n} \sum_{x\in \{0,1\}^n}{}_\theta\!\ketbra{xx}{xx}_{\theta} \left(I-\ketbra{\phi^{+}}{\phi^{+}}^{\otimes s}\right)^{\otimes (n/s)}\\
&\quad= \left(\ketbra{\phi^{+}}{\phi^{+}} + \frac{1}{2}\ketbra{\phi^{-}}{\phi^{-}} + \frac{1}{2}\ketbra{\psi^{+}}{\psi^{+}}\right)^{\otimes n}\left(I-\ketbra{\phi^{+}}{\phi^{+}}^{\otimes s}\right)^{\otimes (n/s)}\\
&\quad= \left(\left(\ketbra{\phi^{+}}{\phi^{+}} + \frac{1}{2}\ketbra{\phi^{-}}{\phi^{-}} + \frac{1}{2}\ketbra{\psi^{+}}{\psi^{+}}\right)^{\otimes s} -\ketbra{\phi^{+}}{\phi^{+}}^{\otimes s}\right)^{\otimes (n/s)}\\
&\quad\leq \frac{1}{2^{n/s}} I.
\end{align*}
To see the latter inequality, note that $P_0 := \ketbra{\phi^+}$ and $P_1 = \ketbra{\phi^-} + \ketbra{\psi^+}$ are projectors with orthogonal range, and hence
\begin{align*}
  (P_0 + \frac{1}{2} P_1)^{\otimes s} - P_0^{\otimes s}
= \sum_{0 \neq x \in \{0,1\}^s} \frac1{2^{|x|}} P_{x_1} \otimes \cdots \otimes P_{x_s}
\leq \frac12 \sum_{0 \neq x \in \{0,1\}^s} P_{x_1} \otimes \cdots \otimes P_{x_s}
\leq \frac12 I.
\end{align*}
Therefore,
\begin{align*}    \Tr\left(\mathsf{E}_{\theta\leftarrow \{0,1\}^n} \sum_{x\in \{0,1\}^n}{}_\theta\!\ketbra{xx}{xx}_{\theta} \left(I-\ketbra{\phi^{+}}{\phi^{+}}^{\otimes s}\right)^{\otimes (n/s)}\rho_{AB}\right)
\leq \frac{1}{2^{n/s}} \Tr(\rho_{AB})
= \frac{1}{2^{n/s}},
\end{align*}
concluding our proof.
\end{proof}

\section{Quantum Key Distribution Protocols}\label{sec:protocol}
We first construct our non-interactive QKD protocol and establish a weak form of everlasting security (\cref{sec:niqkd}).
We then show how to convert this protocol into a two-round protocol to achieve the standard definition of everlastingly security (\cref{sub:tworound}).

\subsection{Non-Interactive Protocol with Weak Everlasting Security}\label{sec:niqkd}
We now present a quantum key distribution protocol that is \emph{non-interactive}, i.e., it consists of a single round of simultaneous messages between Alice and Bob.
Our construction assumes the existence of a post-quantum non-interactive key exchange (NIKE) protocol and upgrades it to a non-interactive quantum key distribution protocol that satisfies a weak version of everlasting security, which we will define below.

\begin{definition}[Non-Interactive QKD Protocol]\label{def:niqkd}
Let $\mathsf{NIKE}= (\mathsf{Stp},\mathsf{Gen},\mathsf{SdK})$ be a post-quantum secure NIKE protocol (\cref{def:nike}), with key space $\{0,1\}^{n(\lambda)}$, where $n(\lambda)$ grows polynomially in the security parameter~$\lambda$.
We define the following \emph{non-interactive~QKD~protocol} with the same key space $\mathsf{SKS} =  \{0,1\}^{n(\lambda)}$:
\begin{enumerate}
\item\label{step:setup} \emph{Setup:}
Run $\mathsf{pp} \leftarrow \mathsf{Stp}(1^\lambda)$.
We assume that $\mathsf{pp}$ are given as input to all parties.
\item \emph{Alice:}
Run $(\mathsf{sk}_A, \mathsf{pk}_A) \leftarrow \mathsf{Gen}(\mathsf{pp}, A)$ and prepare $n(\lambda)$ EPR pairs.
Send the classical bitstring~$\mathsf{pk}_A$ to Bob, along with one qubit of each EPR pair.

\emph{Bob:}
Sample $(\mathsf{sk}_B, \mathsf{pk}_B) \leftarrow \mathsf{Gen}(\mathsf{pp}, B)$ and send the classical bitstring $\mathsf{pk}_B$ to Alice.
\item \label{step:measurement} \emph{Output:}
Alice computes $\theta_A \leftarrow \mathsf{SdK}(B, \mathsf{pk}_B, A, \mathsf{sk}_A)$ and measures her remaining $n(\lambda)$ qubits in the $\theta_A$-basis to obtain $K_A \in \{0,1\}^{n(\lambda)}$.
Similarly, Bob computes $\theta_B \leftarrow \mathsf{SdK}(A, \mathsf{pk}_A, B, \mathsf{sk}_B)$ and measures his $n(\lambda)$ qubits in the $\theta_B$-basis to obtain $K_B \in \{0,1\}^{n(\lambda)}$.
\end{enumerate}
\end{definition}
This defines a QKD protocol that is non-interactive: There is a single round of communication, consisting of one message from Alice to Bob and one from Bob to Alice, with the two messages not depending on each other.
Moreover, the correctness of protocol is immediate:
by the correctness of the NIKE protocol, it holds that $\mathsf{SdK}(B, \mathsf{pk}_B, A, \mathsf{sk}_A) = \mathsf{SdK}(A, \mathsf{pk}_A, B, \mathsf{sk}_B)$ with overwhelming probability -- and in this case, Alice and Bob measure their EPR pairs in the same basis ($\theta_A = \theta_B$), hence they obtain the same outcome $K_A = K_B$.

However, it is easy to see that the protocol does \emph{not} satisfy the standard notion of everlasting security.
Indeed, the QPT adversary can keep the~$n(\lambda)$ qubits that Alice sends to Bob and instead output one qubit each of~$n(\lambda)$ fresh EPR pairs, and also store the public keys~$\mathsf{pk}_A, \mathsf{pk}_B$.
Since the post-quantum NIKE is only computationally secure, this information suffices to (inefficiently) learn~$\theta_A = \theta_B$.
Then $K_A$ can be obtained by suitable basis measurements on the qubits that were sent by Alice and kept by the adversary, and $K_B$ can be obtained on the remaining qubits kept by the adversary.
Interestingly, for this attack it holds that $K_A = K_B$ only with negligible probability (since~$K_A$ and~$K_B$ are now independent and uniformly random).
This is no accident.
Indeed we will now show that the protocol still satisfies a form of everlasting security provided~$K_A = K_B$.
We now give a formal definition tailored to the protocol defined in \cref{def:niqkd}:

\begin{definition}[Weak Everlasting Security]\label{def:weak-search}
Consider the following experiment involving Alice, Bob, and an adversary described by an non-uniform QPT algorithm:
\begin{enumerate}[label={{\Roman*}.},ref={{\Roman*}}]
\item\label{step:weak 1} We run step 1 of \cref{def:niqkd} and also give $\mathsf{pp}$ as an input to the adversary.
\item\label{step:weak 2} We run step 2 of \cref{def:niqkd}, but instead of delivering the two messages, we first send them to the adversary, who returns a register~$B$ (and keeps an internal register~$E$).
Modify Alice's message to consist of the quantum register~$B$, along with the original classical bitstring, and deliver it to Bob.
Deliver Bob's message unchanged as it only consists of a classical bitstring.
\item\label{step:weak 3} We proceed by running step 3 of \cref{def:niqkd}.
Let $K_A,K_B$ denote Alice's and Bob's output, respectively.
If $K_A \neq K_B$, we set $K$ to be a uniformly random bitstring in~$\{0,1\}^{n(\lambda)}$.
Otherwise, we set $K := K_A = K_B$.
Let~$\rho_{KE}$ be the classical-quantum joint state of $K$ and the adversary's internal register~$E$.
\end{enumerate}
We say that the protocol satisfies \emph{weak everlasting security} if there exists a function $t(\lambda) = \omega(\log(\lambda))$ such that following holds:
for every QPT adversary, there exists $\lambda_0$ such that, for all~$\lambda \geq \lambda_0$,
\begin{equation}\label{eq:min entropy}
H_{\min}(K|E)_\rho \geq t(\lambda).
\end{equation}
\end{definition}

Using our computational monogamy of entanglement result, we now show that the QKD protocol indeed satisfies this weaker notion of everlasting security.

\begin{theorem}\label{thm:weak}
The non-interactive QKD protocol (\cref{def:niqkd}) is correct and satisfies weak everlasting security (\cref{def:weak-search}).
\end{theorem}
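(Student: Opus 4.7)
The plan is to reduce weak everlasting security directly to our computational monogamy-of-entanglement theorem (\cref{thm:moe strong}). Correctness is immediate from correctness of the NIKE combined with \cref{lem:support}: when $\theta_A = \theta_B$ (which happens with overwhelming probability), measuring both halves of each EPR pair in the same basis yields identical outcomes, so $K_A = K_B$.

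For the security reduction, I instantiate the monogamy game with the distribution $\mathcal{Z}(1^\lambda)$ that samples $p = (\mathsf{pp}, \mathsf{pk}_A, \mathsf{pk}_B)$ together with $\theta = \mathsf{SdK}(B, \mathsf{pk}_B, A, \mathsf{sk}_A)$ via the NIKE setup and key generation. This distribution is efficiently samplable, and by post-quantum NIKE security (\cref{eq:pq strong}) it satisfies the strong indistinguishability assumption \cref{eq:Z indist strong}. The efficient preparation phase is realized by composing Alice's honest preparation of $n(\lambda)$ EPR pairs with the QPT channel applied by the adversary: Alice retains register $A$; the adversary's quantum output forms register $B$ delivered to Bob; and the adversary's internal register plays the role of Charlie's register $C$ (which, without loss of generality, also stores $p$). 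The adversary's later unbounded computation in the weak-security experiment is exactly modeled by Charlie's arbitrary POVM in the semi-honest answer phase.

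The main subtlety is that in the QKD protocol Bob measures in basis $\theta_B$, whereas the monogamy game has him measure in basis $\theta = \theta_A$. By NIKE correctness these differ only with negligible probability, so a short hybrid argument shows that the cq-states $\rho_{KE}$ arising in the two experiments are negligibly close in trace distance, shifting the guessing probability (and hence min-entropy) by at most a negligible term. Applying \cref{thm:moe strong} then yields $H_{\min}(K|C\Theta)_\rho \geq t(\lambda) = \omega(\log \lambda)$. Since the QKD adversary's side information $E$ can be obtained from $(C,\Theta)$ by a quantum channel (simply discarding $\Theta$ and the copy of $p$ in $C$), data processing for the min-entropy gives $H_{\min}(K|E)_\rho \geq H_{\min}(K|C\Theta)_\rho \geq t(\lambda)$, which is precisely the bound \cref{eq:min entropy}.

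The principal obstacle I anticipate is threading the quantifier order carefully: both \cref{eq:pq strong} and \cref{thm:moe strong} guarantee a single $\lambda_0$ that depends only on the QPT preparation (equivalently, the QPT adversary) and not on Charlie's later unbounded POVM. I need to verify that the small hybrid step handling the $\theta_A$ vs.\ $\theta_B$ mismatch also respects this uniformity, so that the min-entropy bound holds uniformly in the (possibly inefficient) unbounded guessing strategy for all sufficiently large $\lambda$. Apart from this bookkeeping, the reduction is tight and does not require any new technical ingredient beyond what has already been established.
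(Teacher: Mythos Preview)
Your proposal is correct and follows essentially the same approach as the paper: instantiate $\mathcal Z$ with $p=(\mathsf{pp},\mathsf{pk}_A,\mathsf{pk}_B)$ and $\theta=\theta_A$, identify the adversary's internal register with Charlie's register, invoke \cref{thm:moe strong}, and handle the $\theta_A$ versus $\theta_B$ mismatch via NIKE correctness (the paper phrases this as a trace-distance bound $\mathsf{Td}(\rho_{KE},\tilde\rho_{KE})\leq\Pr(\theta_A\neq\theta_B)$, which is exactly your hybrid step). Your attention to the quantifier order is well placed and matches the paper's use of the strong indistinguishability notion; the only cosmetic difference is that the paper sets $C=E$ directly and reads off $H_{\min}(K|E)$ from \cref{thm:moe strong} rather than invoking data processing explicitly.
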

\begin{proof}
We already established correctness in the discussion above, so it remains to prove security.
We can write the cq-state in \cref{def:weak-search} as~$\rho_{KE} = \mathsf{E}_{\theta_A,\theta_B} \rho_{KE}^{(\theta_A\theta_B)}$, where~$\rho_{KE}^{(\theta_A\theta_B)}$ is the cq-state conditioned on fixed values of~$\theta_A$ and~$\theta_B$ and the average is over the marginal distribution of~$(\theta_A,\theta_B)$.
To establish the bound on the min-entropy, we wish to compare~$\rho_{KE}$ to the cq-state arising in the computational monogamy-of-entanglement theorem (\cref{thm:moe strong}).
Let $\mathcal Z(1^\lambda)$ denote the joint distribution of~$(p,\theta_A)$ sampled by the following efficient algorithm:
\begin{enumerate}[noitemsep,nolistsep]
    \item Sample $\mathsf{pp} \leftarrow \mathsf{Setup}(1^\lambda)$, $(\mathsf{sk}_A, \mathsf{pk}_A) \leftarrow \mathsf{Gen}(\mathsf{pp},A)$, and $(\mathsf{sk}_B, \mathsf{pk}_B) \leftarrow \mathsf{Gen}(\mathsf{pp},B)$.
    \item Output $p := (\mathsf{pp}, \mathsf{pk}_A, \mathsf{pk}_B)$ and $\theta_A := \mathsf{SdK}(B, \mathsf{pk}_B, A, \mathsf{sk}_A)$.
\end{enumerate}
The post-quantum security of the NIKE in \cref{eq:pq strong} implies the computational indistinguishability for the computational monogamy-of-entanglement game in \cref{eq:Z indist strong}.
Now suppose that Eve plays the role of Charlie ($C=E$) and let~$\rho_{ABC}^{(p)}$ denote the state of Alice, Bob, and Charlie after steps~I and~II of \cref{def:weak-search} (which are efficient).
This constitutes an efficient preparation phase for the computational monogamy-of-entanglement game.
Because in the game both Alice and Bob use the same measurement basis~$\theta_A$, the cq-state described in \cref{thm:moe strong} is given by $\tilde\rho_{KE} := \mathsf{E}_{\theta_A} \rho_{KE}^{(\theta_A\theta_A)}$.
Thus \cref{thm:moe strong} implies that there exists~$\lambda_0$ such that, for all $\lambda \geq \lambda_0$, we have
\[ H_{\min}(K|E)_{\tilde\rho} \geq t(\lambda) \]
or, equivalently,
\begin{align*}
  p_\text{guess}(K|E)_{\tilde\rho} \leq 2^{-t(\lambda)},
\end{align*}
where $t(\lambda) = \omega(\log\lambda)$ is a function that is independent of the adversary.
Note that $2^{-t(\lambda)}$ is negligible.
By the correctness of the NIKE protocol, $\mathsf{Td}(\rho_{KE}, \tilde{\rho}_{KE}) \leq \mathsf{P}(\theta_A \neq \theta_B)$ is also a negligible function independent of the adversary.
Hence the above also holds for $\rho$, concluding the proof.
\end{proof}

One can obtain an explicit min-entropy bound in \cref{eq:min entropy} by using the formula in \cref{thm:moe strong} along with a bound on the correctness of the post-quantum secure NIKE used in the construction.
E.g., if the security of the post-quantum NIKE holds with a negligible function $2^{-\Omega(\sqrt n(\lambda))}$ and correctness holds with a failure probability of $2^{-\Omega(\sqrt n(\lambda))}$, then we have $H_{\min}(K|E) = \Omega(\sqrt{n(\lambda)})$.

We remark that if in the post-quantum security of the NIKE (\cref{eq:pq strong} in \cref{def:nike}) we replace the strong computational indistinguishability~$\approx_{sc}$ by~$\approx_c$ (\cref{def:computational indist}), then weak everlasting security still holds with the right-hand side of \cref{eq:min entropy} given by a function $t=\omega(\log\lambda)$ that can now depend on the adversary.
This is still a meaningful notion of security.
However, we need the stronger notion to construct the two-round protocol that we describe next.

\subsection{Two-Round Protocol with Everlasting Security}\label{sub:tworound}
We now describe a two-round simultaneous-message protocol to achieve the standard definition of everlasting security.
This protocol builds on the one-round protocol constructed in \cref{sec:niqkd} which satisfies weak everlasting security.
We use a collision-resistant hash function to verify that $K_A = K_B$ in the second round of communication, and a seeded randomness extractor for privacy amplification.

\begin{definition}[Two-Round QKD Protocol]\label{def:2qkd}
Let $\mathsf{NIQKD}$ be the non-inter\-active QKD protocol of \cref{def:niqkd}, with key space $\{0,1\}^{n(\lambda)}$ and min-entropy bound~\eqref{eq:min entropy} given by $t(\lambda) = \omega(\log\lambda)$.
Let~$m(\lambda) := \Theta(t(\lambda))$, and choose a collision-resistant hash function $\mathbbm{H}_\lambda = \{ h : [2^{n(\lambda)}]\to[2^{m(\lambda)}] \}$ (\cref{def:CR}), as well as a seeded strong average-case $(\Theta(m(\lambda)),2^{-\Theta(m(\lambda))})$-extractor $\textsf{Ext} \colon \mathcal S_\lambda \times \{0,1\}^{n(\lambda)} \to \{0, 1\}^{m(\lambda)}$ (\cref{def:extractor}).
We first define a two-round sub-protocol:
\begin{enumerate}
\item \emph{Setup and Round~1:} Alice and Bob prun $\mathsf{NIQKD}$ to obtain $K_A, K_B \in \{0,1\}^{n(\lambda)}$.
\item \emph{Round 2:} Alice samples $\mathsf{seed}_A \gets \mathcal S_\lambda$, $h_A \gets \mathbbm{H}_\lambda$ and sends $(\mathsf{seed}_A,h_A,h_A(K_A))$ to Bob.
Bob samples~$h_B \gets \mathbbm{H}_\lambda$ and sends $(h_B,h_B(K_B))$ to Alice.
\item \emph{Output:}
Alice returns $K_A$ if $h_B(K_A) = h_B(K_B)$, and otherwise~$\bot$.
Bob returns $K_B$ if $h_A(K_A) = h_A(K_B)$, and otherwise~$\bot$.
\end{enumerate}
We now present our \emph{two-round QKD protocol} with key space $\mathsf{SKS} = \{0,1\}^{m(\lambda)}$:%
\footnote{This protocol applies the extractor to a concatenation of the two `subkeys' in a~fixed order. To obtain a fully symmetrical protocol, Alice and Bob can in the first round sample and exchange random bits $b_A, b_B \leftarrow \{0,1\}$, and in the second round use $b = b_A \oplus b_B$ to decide whether to apply the extractor to $K^0 \Vert K^1$ or $K^1 \Vert K^0$, respectively.}
\begin{itemize}
    \item \emph{Parallel Sub-Protocol Runs:} Alice and Bob run two parallel (independent) instances of the above sub-protocol, once as above and once the roles of the two parties swapped.
    Denote by $(K_A^0, K_A^1)$ the outputs of Alice and by $(K_B^0, K_B^1)$ the outputs of Bob for the two sub-protocol runs.
    Moreover, denote by $\mathsf{seed}_A$ and~$\mathsf{seed}_B$ the seeds sampled in the two sub-protocol runs.
    \item \emph{Output:}
    If $K_A^0 \neq \bot$ and $K_A^1 \neq \bot$, Alice outputs
    \begin{align*}
        K_A^* = \mathsf{Ext}(\mathsf{seed}_A\oplus \mathsf{seed}_B, K_A^0 \| K_A^1),
    \end{align*}
    and otherwise~$\bot$.
    Likewise, if $K_B^0 \neq \bot$ and $K_B^1 \neq \bot$, Bob outputs
    \begin{align*}
        K_B^* = \mathsf{Ext}(\mathsf{seed}_A\oplus \mathsf{seed}_B, K_B^0 \| K_B^1),
    \end{align*}
    and otherwise~$\bot$.
\end{itemize}
\end{definition}

We consider the following properties:
\begin{itemize}
\item \emph{Correctness:} There exists a negligible function $\mathsf{negl}$ such that
\[
    \Pr\left(K_A^* = K_B^* \neq \bot\right) \geq 1- \mathsf{negl}(\lambda).
\]
\item \emph{Everlasting Security:}
Consider the following experiment involving Alice, Bob, and an adversary described by an non-uniform interactive QPT machine:
\begin{enumerate}[label={{\Roman*}.},ref={{\Roman*}}]
\item We run the QPT setup algorithm with input $1^\lambda$ to obtain public parameters~$\textsf{pp}$, which are given as input to Alice, Bob, and the adversary.
\item We then run the interactive protocol but with the following modification:
Recall that each message consists of a classical bitstring and a quantum register.
Instead of directly delivering the messages, the adversary can intercept them and return modified quantum registers.%
\footnote{The adversary is allowed to intercept multiple message at the same time, even across different rounds of the protocol, and also only return a subset of the quantum registers at a time, as long as compatible with the causal order of the protocol.}
The messages are then delivered with those quantum registers and the original classical bitstrings, which are always left unchanged.
\item Let $K^*_A$ denote Alice's output, let $K^*_B$ denote Bob's output, and let $E$ denote the internal register of the adversary at the end of the protocol.
\end{enumerate}
We say that the protocol satisfies \emph{everlasting security} if there exists a negligible function $\mathsf{negl}$ such that the following holds:
for any QPT adversary in the above experiment, there exists~$\lambda_0$ such that, for all~$\lambda \geq \lambda_0$, if we sample $U \leftarrow \{0,1\}^{m(\lambda)}$ independently and uniformly and put
\begin{align*}
  U_A = \begin{cases} \bot & \text{ if } K^*_A = \bot \\ U & \text{ otherwise} \end{cases}
  \quad\text{and}\quad
  U_B = \begin{cases} \bot & \text{ if } K^*_B = \bot \\ U & \text{ otherwise} \end{cases},
\end{align*}
then
\begin{align*}
    \mathsf{Td}\bigl( \rho_{EK^*_A}, \rho_{E U_A} \bigr) \leq \mathsf{negl}(\lambda)
\quad\text{and}\quad
    \mathsf{Td}\bigl( \rho_{EK^*_B}, \rho_{E U_B} \bigr) \leq \mathsf{negl}(\lambda),
\end{align*}
where $\rho_{EK^*_AK^*_BU_AU_B}$ denotes the classical-quantum state describing the adversary's internal register and the random variables~$K^*_A, K^*_B,U_A, U_B$.
\item \emph{Verifiability:}
There exists a negligible function~$\mathsf{negl}$ such that the following holds:
for any QPT adversary in the above experiment, there exists~$\lambda_0$ such that, for all~$\lambda \geq \lambda_0$,
\[
    \Pr\mleft(K^*_A \neq K^*_B\mright) \leq \mathsf{negl}(\lambda).
\]
\end{itemize}

Note that the classical messages sent by the parties are honestly delivered in the experiment underlying the security definition.
This models the presence of (public) authenticated classical channels, which is a necessary assumption for security of key exchange protocols.
We refer to~\cite{MW24} for a discussion on this aspect and for other considerations on the above definition.

The correctness of our two-round protocol follows immediately from the correctness of the non-interactive protocol.
Next we consider verifiability.
Clearly, if~$K_A^0=K_B^0$ and~$K_A^1 = K_B^1$, then~$K_A^* = K_B^*$.
Now suppose that $K_A^0 \neq K_B^0$.
Then we must have $h_A^0(K_A^0) = h_A^0(K_B^0)$ or $h_B^0(K_A^0) = h_B^0(K_B^0)$, or both.
But this can only happen with negligible probability, for otherwise we would obtain a contradiction to the collision-resistance of the hash function (since the protocol runs in QPT and the keys are sampled honestly).
The case that $K_A^1 \neq K_B^1$ works identically.
We conclude that~$K_A^* = K_B^*$ with overwhelming probability.

We now sketch our argument for everlasting security.
Without loss of generality it suffices to consider an attack of the following form.
First, the attacker intercepts the simultaneous first-round messages sent by Alice and Bob, and outputs a quantum register that, along with the classical part of Alice's original message, gets delivered to Bob.%
\footnote{The case where a message first gets delivered to Alice can be analyzed in exactly the same fashion, thanks to the symmetry of our protocol.}
Next, the attacker receives Bob's second-round message (which is classical) and outputs a quantum register that, along with the classical part of Bob's first-round message, gets delivered to Alice.
The attacker records all remaining messages, which are classical and get delivered honestly.

We henceforth concentrate on the first subprotocol and we consider the second subprotocol as part of the adversary, which is possible because the two subprotocols are independent.
Then the weak everlasting security of our NI-QKD protocol (\cref{thm:weak}) shows that $H_{\min}(K^0 | E^0) \geq t(\lambda)$, where $K^0$ is defined as in the definition of weak everlasting security (in terms of~$K^0_A$ and~$K^0_B$), and $E^0$ denotes the internal state of the adversary after the first interception. The subsequent message from Bob
$(h_B^0, h_B^0(K_B^0))$ contains information about $K_B^0$, but using the chain rule for the min-entropy and the structure of the protocol, it follows that $H_{\min}(K^0 | E) \geq \Omega(t(\lambda))$, where $E = (E_0, h_B^0, h_B^0(K_B^0))$.
Crucially, Alice's seed, $\mathsf{seed}_A$, is chosen independently from $K^0_B$ (and~$K^0_A$).
Thus, if $K^0_A = K^0_B$, so $K^0_A = K^0$, the randomness extractor guarantees that $K^*_A E$ and $U_A E$ are negligibly close in trace distance.
On the other hand, if $K^0_A \neq K^0_B$ then~$K^*_A = \bot$ with overwhelming probability by the same argument used to prove verifiability, and hence the same holds.
Combining these two cases concludes the proof.

\section{Entanglement is Necessary for NI-QKD}
In this section we prove that any non-interactive QKD protocol where Alice and Bob derive their shared key only from classical randomness cannot be everlastingly secure (\cref{thm:need quantum}).
This is in particular the case when the quantum messages sent by Alice and Bob are unentangled with the respective sender's quantum memories.
In other words, our result implies that entanglement is required for non-interactive quantum key distribution, and at least one of Alice or Bob must have a quantum memory (\cref{cor:need entanglement}).
In contrast, multi-round QKD can achieve everlasting security (even unconditional security) by sending unentangled systems from Alice to Bob.

Throughout this section we consider a non-interactive QKD protocol with key space $\mathsf{SKS} = \{0,1\}^{m(\lambda)}$ that has the following form:
\begin{enumerate}
\item Alice efficiently samples $(R_A,S_A,\rho_{AM_A})$, where $R_A$ and $S_A$ are $\poly(\lambda)$-length bitstrings and~$\rho_{AM_A}$ is a $\poly(\lambda)$-qubit state.
Bob efficiently samples $(R_B,S_B,\sigma_{BM_B})$, where $R_B$ and~$S_B$ are $\poly(\lambda)$-length bitstrings and $\sigma_{BM_B}$ is a $\poly(\lambda)$-qubit state.
\item Alice sends the bitstring~$S_A$ and the quantum system~$M_A$ to Bob.
Simultaneously, Bob sends the bitstring~$S_B$ and the quantum system~$M_B$ to Alice.
\item Finally, Alice applies an efficient measurement (depending on $R_A, S_A, S_B$) to quantum systems~$A M_B$ to obtain the key~$K_A \in \{0,1\}^{m(\lambda)} \cup \{\bot\}$.
Likewise, Bob applies an efficient measurement (depending on $R_B, S_A, S_B$) to~$B M_A$ to obtain~$K_B \in \{0,1\}^{m(\lambda)} \cup \{\bot\}$.
\end{enumerate}

For simplicity we further assume that the protocol is \emph{perfectly correct}.
That is, we assume that in the absence of an adversary, $K_A = K_B \neq \bot$ with certainty.

\begin{definition}[NI-QKD with classically-derived keys]\label{def:classical keys}
We say that a NI-QKD protocol of the form above has \emph{classically-derived keys} if there is a function~$f$ such that $K_A = K_B = f(R_A, R_B, S_A, S_B)$.
\end{definition}

In other words, while quantum information can used in the protocol, the agreed-upon secret key is a function of the classical randomnes sampled by Alice and Bob only (rather than of any quantum randomness produced during the protocol execution). We emphasize that the function~$f$ need not be efficient nor does it have to be known to Alice or Bob -- it merely needs to exist.

\begin{theorem}\label{thm:need quantum}
No perfectly correct NI-QKD protocol with classically-derived keys can be everlastingly secure.
In fact, for any such protocol there exists an attack with an efficient online phase that uses only nondestructive quantum measurements (hence Alice and Bob still output $K_A = K_B \neq \bot$ with certainty) and an unbounded offline phase that outputs $K_A = K_B$ with constant probability.
\end{theorem}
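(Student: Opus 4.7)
The plan is to formalize the intuitive attack sketched in the technical outline, combining two ingredients: a non-destructiveness lemma, and a PAC-learning-style analysis of Eve's guess. First I would argue that perfect correctness together with classically-derived keys forces the honest measurements to act as the identity on honest joint states. Concretely, since Alice's measurement on $AM_B$ (with inputs $R_A, S_A, S_B$) returns the deterministic value $f(R_A, R_B, S_A, S_B)$ on the honest state $\rho_{AM_A} \otimes \sigma_{BM_B}$, the measurement operator corresponding to this outcome (after Naimark dilation if needed, with the ancilla held by Eve) must stabilize the support of this state, so the measurement does not disturb it; the same holds for Bob.

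With this in hand, Eve's attack proceeds as follows. In the online phase she intercepts both messages $(S_A, M_A)$ and $(S_B, M_B)$. For $i = 1, \ldots, t = \poly(\lambda)$, she samples fresh honest Bob randomness $(R_B^{(i)}, S_B^{(i)}, \sigma^{(i)})$ and applies Bob's honest measurement to the register $B^{(i)} M_A$, obtaining $K_B^{(i)} = f(R_A, R_B^{(i)}, S_A, S_B^{(i)})$ without disturbing $M_A$. Symmetrically she runs $t$ simulated Alices against $M_B$, recording $K_A^{(j)} = f(R_A^{(j)}, R_B, S_A^{(j)}, S_B)$. She then forwards $M_A$ and $M_B$ unchanged, so Alice and Bob still output $K_A = K_B = K \neq \bot$ with certainty. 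In the unbounded offline phase, Eve samples $R_A^*$ from the Bayesian posterior $p(R_A \mid V_A)$, where $V_A = (S_A, \{(R_B^{(i)}, S_B^{(i)}, K_B^{(i)})\}_i)$, and $R_B^*$ from the analogous posterior $p(R_B \mid V_B)$; she outputs $K^* := f(R_A^*, R_B^*, S_A, S_B)$.

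The analysis is a realizable PAC-learning argument. The hypothesis class $\{h_{R_A'}(R_B, S_B) := f(R_A', R_B, S_A, S_B)\}$ has size at most $2^{\poly(\lambda)}$, and Eve's Bob-simulations provide $t$ i.i.d.\ training samples from the marginal distribution $\mathcal{D}_B$ of $(R_B, S_B)$, labeled by the target hypothesis $h_{R_A}$. The standard realizable PAC bound then yields, for $t = \poly(\lambda)/\epsilon$ and with probability at least $1 - \delta$ over the simulations, test-point error at most $\epsilon$ for any consistent hypothesis, in particular for Eve's posterior sample $R_A^*$. Taking as test point the actual $(R_B, S_B)$, which is a fresh iid sample from $\mathcal{D}_B$ independent of $V_A$, we conclude that $f(R_A^*, R_B, S_A, S_B) = K$ except with probability $\epsilon + \delta$. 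A symmetric PAC argument, applied with test point $(R_A^*, S_A)$, gives $f(R_A^*, R_B^*, S_A, S_B) = f(R_A^*, R_B, S_A, S_B)$ except with probability $\epsilon + \delta$. A union bound yields $\Pr[K^* = K] \geq 1 - 2(\epsilon + \delta)$, which is a positive constant for any fixed small $\epsilon, \delta$.

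The main obstacle I expect is justifying the use of the symmetric PAC bound, which requires the test point $(R_A^*, S_A)$ to be distributed as a fresh sample from $\mathcal{D}_A$ and to be independent of the Alice-simulations. Verifying this cleanly relies on decomposing Eve's view into the two blocks $V_A$ and $V_B$, which are independent because the simulation randomness is fresh and the actual randomnesses of Alice and Bob are independently sampled, together with the Bayesian identity that $R_A^*$ and $R_A$ are conditionally i.i.d.\ given $V_A$, so that the marginal of $(R_A^*, S_A)$ recovers the prior $\mathcal{D}_A$. The remaining (easier) subtlety is the non-destructiveness claim of the first step, which for general POVMs requires invoking Naimark's dilation and placing the ancilla in Eve's register.
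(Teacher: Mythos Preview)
Your proposal is correct and takes essentially the same approach as the paper: the paper likewise establishes non-destructiveness from determinism, has Eve run $2r(\lambda)$ fresh simulations of Bob against $M_A$ and of Alice against $M_B$, samples $R_A^*$ from the distribution of $R_A$ conditioned on consistency with the observed outcomes, and uses the same two key facts you identify---the union bound over the $\leq 2^{r(\lambda)}$ hypotheses to eliminate ``bad'' candidates, and the Bayesian identity that the resampled $R_A^*$ has the same marginal as $R_A$ and is independent of $V_B$---to chain $f(R_A^*,R_B^*) = f(R_A^*,R_B) = f(R_A,R_B)$. The only cosmetic differences are that the paper works with a fixed threshold (yielding success probability $\tfrac13-\mathsf{negl}$) rather than your PAC parameters $\epsilon,\delta$, and picks $R_B^*$ arbitrarily from the consistent set rather than from the posterior.
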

\begin{proof}
Without loss of generality, we may assume both~$R_A$ and~$R_B$ consist of the same number $r(\lambda)=\poly(\lambda)$ of bits.
We may furthermore assume that~$R_A$ includes~$S_A$ and~$R_B$ includes~$S_B$.
Then the function $f$ in \cref{def:classical keys} will only depends on~$R_A$ and~$R_B$, i.e.\ $K_A = K_B = f(R_A, R_B)$, and the (without loss of generality projective) efficient measurements applied by Alice and Bob to obtain their keys only depend on their local randomness and the classical message sent by the other party.
We denote Alice's projective measurements by
$\{P^{(k_A|r_A,s_B)}_{A M_B}\}_{k_A \in \{0,1\}^\lambda \cup \{\bot\}}$
and Bob's projective measurements by
$\{Q^{(k_B|r_B,s_A)}_{B M_A}\}_{k_B \in \{0,1\}^\lambda \cup \{\bot\}}$.
Then the condition $K_A = K_B = f(R_A, R_B)$ means the following:
If $(R_A, S_A, \rho_{AM_A})$ and~$(R_B, S_B, \rho_{BM_B})$ are obtained by Alice and Bob's sampling algorithms, then with probability one we have
\begin{align}\label{eq:P is det}
    \tr P^{(k_A|R_A,S_B)}_{A M_B} (\rho_A \ot \rho_{M_B}) = \delta_{k_A,f(R_A, R_B)}.
\end{align}
as well as
\begin{align}\label{eq:Q is det}
    \tr Q^{(k_B|R_B,S_A)}_{B M_A} (\rho_B \ot \rho_{M_A}) = \delta_{k_B,f(R_A, R_B)}.
\end{align}
In particular, both measurement outcomes are deterministic (conditional on $R_A$ and $R_B$) and hence these projective measurements do \emph{not} change the measured quantum registers.

With this in mind we consider the following attack:
\begin{itemize}
\item In the efficient online phase, Eve intercepts the classical $S_A$ and $S_B$~messages and the $M_A$ and~$M_B$ quantum registers sent by Alice and Bob.
For~$t \in [2r(\lambda)]$:
\begin{itemize}
    \item Eve runs Bob's sampling algorithm to obtain $(R^{(t)}_B,\rho^{(t)}_{BM_B})$ and applies Bob's measurement $\{Q^{(k_B|R^{(t)}_B,S_A)}_{B M_A}\}$ on the~$B$ register of~$\rho^{(t)}$ and the~$M_A$ register received from Alice.
    By \cref{eq:Q is det}, the measurement outcome is $\alpha_t := f(R_A, R^{(t)}_B)$, where $R_A$ is the random variable generated by Alice in step~1 of the protocol.
    \item Eve runs Alice's sampling algorithm to obtain $(R^{(t)}_A, \rho^{(t)}_{AM_A})$ and applies Alice's measurement $\{P^{(k_A|R^{(t)}_A,S_B)}_{A M_B}\}$ on the~$A$ register of~$\rho^{(t)}$ and the~$M_B$ register received from Bob.
    By \cref{eq:P is det}, the measurement outcome is $\beta_t := f(R^{(t)}_A, R_B)$, where $R_B$ is the random variable generated by Bob in step~1 of the protocol.
\end{itemize}
Finally Eve passes the quantum registers~$M_A$ and~$M_B$ to Bob and Alice, respectively.
The classical messages~$S_A$ and~$S_B$ are also delivered honestly.
\item In the inefficient offline phase, Eve defines sets~$\Gamma_A := \{ r_A^* : f(r_A^*, R^{(t)}_B) = \alpha_t \ \forall t \in [2r(\lambda)] \}$~and $\Gamma_B := \{ r_B^* : f(R^{(t)}_A, r_B^*) = \beta_t \ \forall t \in [2r(\lambda)] \}$.
First, Eve samples~$R_A^*$ from the distribution of~$R_A$ \emph{conditional on the event~$R_A \in \Gamma_A$}.
Then, Eve~picks any~$R_B^* \in \Gamma_B$, and outputs~$f(R_A^*, R_B^*)$.
\end{itemize}
Note that the quantum registers are passed on unchanged.
Therefore, Alice and Bob will still agree on the key $f(R_A,R_B)$ (not equal to $\bot$) with overwhelming probability.

To analyze the attack, we define the sets
\begin{align*}
    G_A^{(r_A, \{r_B^{(t)}\})} &:= \left\{ r^*_A \in \{0,1\}^{r(\lambda)} \;\middle|\; f(r^*_a, r^{(t)}_B) = f(r_A, r^{(t)}_B) \ \forall t\in[2r(\lambda)] \right\}, \\
    H_A^{(r_A)} &:= \left\{ r^*_A \in \{0,1\}^{r(\lambda)} \;\middle|\; \Pr_{R_B}\mleft( f(r^*_A, R_B) = f(r_A, R_B) \mright) \leq \frac23 \right\}.
\end{align*}
For any $r_A \in \{0,1\}^{r(\lambda)}$ and any $r^*_A \in H_A^{(r_A)}$, we have that
\begin{align*}
    \Pr_{\{R_B^{(t)}\}}\mleft( r^*_A \in G_A^{(r_A,\{R_B^{(t)}\})} \mright)
&= \Pr_{\{R_B^{(t)}\}}\mleft( f(r^*_a, R_B^{(t)}) = f(r_A, R_B^{(t)}) \ \forall t\in[2r(\lambda)] \mright) \\
&= \prod_{t=1}^{2r(\lambda)} \Pr_{\{R_B^{(t)}\}}\mleft( f(r^*_a, R_B^{(t)}) = f(r_A, R_B^{(t)}) \mright)
\leq \left( \frac23 \right)^{2 r(\lambda)} = \left( \frac49 \right)^{r(\lambda)}
\end{align*}
because the $R^{(t)}_B$ are sampled independently and from the same distribution as $R_B$.
As a consequence, we have for any $r_A \in \{0,1\}^{r(\lambda)}$, using the union bound,
\begin{align*}
    \Pr_{\{R_B^{(t)}\}}\mleft( G_A^{(r_A,\{R_B^{(t)}\})} \cap H_A^{(r_A)} \neq \emptyset \mright)
\leq \sum_{r^*_A \in H_A^{(r_A)}} \Pr_{\{R_B^{(t)}\}}\mleft( r^*_A \in G_A^{(r_A,\{R_B^{(t)}\})} \mright)
\leq \left( \frac89 \right)^{r(\lambda)}.
\end{align*}
As the $R^{(t)}_B$ are sampled independenly of $R_A$, the above bound also holds with~$R_A$ in place of~$r_A$.
Because $R_A^* \in \Gamma_A = G_A^{(R_A,\{R_B^{(t)}\})}$, we obtain
\begin{align}\label{eq:R_A bound}
     \Pr\left( R_A^* \in H_A^{(R_A)} \right)
\leq \Pr_{R_A, \{R_B^{(t)}\}}\mleft( G_A^{(R_A,\{R_B^{(t)}\})} \cap H_A^{(R_A)} \neq \emptyset \mright)
\leq \left( \frac89 \right)^{r(\lambda)}.
\end{align}
Similarly, if we define for $r_B \in \{0,1\}^{r(\lambda)}$ the set
\begin{align*}
    H_B^{(r_B)} &:= \left\{ r^*_B \in \{0,1\}^{r(\lambda)} \;\middle|\; \Pr_{R_A}\mleft( f(R_A, r^*_B) = f(R_A, r_B) \mright) \leq \frac23 \right\},
\end{align*}
we obtain with the same argument that
\begin{align}\label{eq:R_B bound}
    \Pr\left( R_B^* \in H_B^{(R_B)} \right) \leq \left( \frac89 \right)^{r(\lambda)}.
\end{align}

Now, on the one hand side, we have
\begin{align*}
    \Pr\mleft( f(R_A^*, R_B) = f(R_A, R_B) \mid R_A^* \not\in H_A^{(R_A)} \mright) \geq \frac23
\end{align*}
by definition of the set~$H_A^{(R_A)}$ and because $R_B$ is sampled independently from~$R_A$ and~$R_A^*$.
Together with \cref{eq:R_A bound}, we obtain
\begin{align}\nonumber
  \Pr\mleft( f(R_A^*, R_B) = f(R_A, R_B) \mright)
&\geq \Pr\mleft(f(R_A^*, R_B) = f(R_A, R_B) \mid R_A^* \not\in H_A^{(R_A)} \mright) - \Pr(R_A^* \in H_A^{(R_A)}) \\
\label{eq:add first star}
&\geq \frac23 - \left( \frac89 \right)^{r(\lambda)}.
\end{align}
On the other hand, we have
\begin{align*}
&\quad \Pr\mleft(f(R_A^*, R_B^*) = f(R_A^*, R_B) \mid R_B^* \not\in H_B^{(R_B)} \mright) \\
&= \Pr\mleft(f(R_A, R_B^*) = f(R_A, R_B) \mid R_B^* \not\in H_B^{(R_B)} \mright)
\geq \frac23,
\end{align*}
The first equality holds because~$R_A^*$ has the same distribution as~$R_A$,%
\footnote{This is an instance of the following fact (with $X=R_A$, $Y=\Gamma_A$, $X^* = R_A^*$):
Given two random variables~$X$~and~$Y$, sample~$X^*$ from the distribution~$p(\cdot|Y)$, where~$p(x|y)=p(x,y)/p(y)$ denotes the conditional distribution of~$X$ given~$Y$.
Then $X^*$ has the same distribution as~$X$.
Indeed, $\sum_y p(y) p(x|y) = p(x)$.
}
and either random variable is independent of~$R_B$ and~$R_B^*$.
Hence we obtain as above, but using \cref{eq:R_B bound}, that
\begin{align*}
    \Pr\mleft( f(R_A^*, R_B^*) = f(R_A^*, R_B) \mright) \geq \frac23 - \left( \frac89 \right)^{r(\lambda)}.
\end{align*}
Together with \cref{eq:add first star}, we conclude that
\begin{align*}
    \Pr\mleft( f(R_A^*, R_B^*) = f(R_A, R_B) \mright)
\geq \frac13 - 2 \left( \frac89 \right)^{r(\lambda)}
= \frac13 - \mathsf{negl}(\lambda),
\end{align*}
which establishes the desired lower bound on the success probability of the attack.
\end{proof}

As is clear from the statement of the theorem, Eve's attack also works in the weak everlasting security model (\cref{sec:niqkd}).

\begin{lemma}
If a perfectly correct NI-QKD protocol of the form described at the beginning of the section is such that both~$\rho_{AM_A}$ and~$\rho_{BM_B}$ are unentangled with certainty, then the protocol has classically-derived keys in the sense of \cref{def:classical keys}.
\end{lemma}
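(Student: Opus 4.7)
The plan is to exploit the product structure of the joint quantum state to force Alice's and Bob's measurement outcomes to be independent conditioned on the classical randomness, and then invoke perfect correctness to conclude that both outcomes must be deterministic functions of that randomness.

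As a preliminary step, I would enlarge $R_A$ and $R_B$ (without loss of generality) so that each contains all of the internal randomness used by Alice's and Bob's respective sampling algorithms. This makes the pair $(S_A, \rho_{AM_A})$ a deterministic function of $R_A$ and, similarly, $(S_B, \rho_{BM_B})$ a deterministic function of $R_B$. Since each party's final measurement already takes $(R_A, S_A, S_B)$ or $(R_B, S_A, S_B)$ as input, enlarging these strings does not alter the protocol.

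By the unentanglement hypothesis I can then write $\rho_{AM_A}^{(R_A)} = \rho_A^{(R_A)} \otimes \rho_{M_A}^{(R_A)}$ and $\rho_{BM_B}^{(R_B)} = \rho_B^{(R_B)} \otimes \rho_{M_B}^{(R_B)}$ for every fixed value of the classical randomness. After the simultaneous message exchange, the joint state on all four quantum registers is
\[
\rho_A^{(R_A)} \otimes \rho_{M_A}^{(R_A)} \otimes \rho_B^{(R_B)} \otimes \rho_{M_B}^{(R_B)},
\]
which in particular factorizes across the cut $AM_B \,|\, BM_A$. Since Alice's measurement acts only on $AM_B$ and Bob's only on $BM_A$, it follows that $K_A$ and $K_B$ are \emph{independent} random variables conditional on $(R_A, R_B)$.

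The last step is to use perfect correctness: $\Pr(K_A = K_B \neq \bot) = 1$ unconditionally, hence also conditionally on every $(r_A, r_B)$ of positive probability. Two independent random variables that agree almost surely must each be concentrated on a single value; indeed, if $K_A$ took two distinct values $k \neq k'$ with positive conditional probability, independence would give $\Pr(K_A = k,\, K_B = k' \mid r_A, r_B) > 0$, contradicting $K_A = K_B$. Therefore there is a value $f(r_A, r_B)$ with $K_A = K_B = f(r_A, r_B)$ almost surely, and setting $f(R_A, R_B, S_A, S_B) := f(R_A, R_B)$ yields the function required by \cref{def:classical keys}. I do not anticipate a real obstacle; the only subtle point is the preliminary absorption of sampling randomness, which is what makes the conditional product factorization well defined on the nose.
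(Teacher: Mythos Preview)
Your proposal is correct and follows essentially the same route as the paper: both argue that the joint state factorizes across the cut $AM_B\,|\,BM_A$, deduce conditional independence of $K_A$ and $K_B$ given the classical randomness, and then use perfect correctness to force both outcomes to be deterministic. The paper phrases the preliminary step as ``we may assume $\rho_{AM_A}$ and $\rho_{BM_B}$ are pure'' (so that unentangled $\Rightarrow$ product), whereas you absorb all sampling randomness into $R_A,R_B$; these are two ways of saying the same WLOG, and the remainder of the argument is identical.
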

\begin{proof}
We may assume that~$\rho_{AM_A}$ and~$\rho_{BM_B}$ are pure states, hence~$\rho_{AM_A} = \rho_A^{(R_A,S_A)} \ot \rho_{M_A}^{(R_A,S_A)}$and likewise~$\rho_{BM_B} = \rho_B^{(R_B,S_B)} \ot \rho_{MB}^{(R_B,S_B)}$.
Thus, $\rho_{AM_A} \ot \rho_{BM_B}$ is also a product state between registers~$AM_B$ and~$BM_A$.
Because Alice and Bob's keys are obtained by applying measurements (depending on $R_A,R_B,S_A,S_B$) on registers~$AM_B$ and~$BM_A$, respectively, we see that $K_A$ and $K_B$ are conditionally independent given~$R_A,R_B,S_A,S_B$.
On the other hand, we have~$K_A = K_B$ by perfect correctness.
Thus $K_A$ and $K_B$ must be deterministic and equal to each other given~$R_A,R_B,S_A,S_B$.
In other words, there exists a function $f$ such that $K_A = K_B = f(R_A,R_B,S_A,S_B)$.
\end{proof}

\begin{corollary}\label{cor:need entanglement}
If a perfectly correct NI-QKD protocol is everlastingly secure, at least one of~$\rho_{AM_A}$ and~$\rho_{BM_B}$ must be entangled.
In particular, Alice or Bob need to have a quantum memory.
\end{corollary}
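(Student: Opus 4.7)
The plan is to derive the corollary from Theorem~\ref{thm:need quantum} and the immediately preceding lemma by a straightforward contrapositive argument. I would assume, for contradiction, that the protocol is perfectly correct and everlastingly secure yet both $\rho_{AM_A}$ and $\rho_{BM_B}$ are unentangled with certainty. Under this assumption the preceding lemma applies directly and yields that the protocol has classically-derived keys in the sense of Definition~\ref{def:classical keys}, i.e., there is a function $f$ with $K_A = K_B = f(R_A, R_B, S_A, S_B)$. Then Theorem~\ref{thm:need quantum} produces an explicit adversary whose efficient online phase uses only nondestructive measurements -- so Alice and Bob still agree on a non-$\bot$ key with certainty -- and whose unbounded offline phase outputs this shared key with probability at least $\tfrac{1}{3} - \mathsf{negl}(\lambda)$. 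This contradicts everlasting security, giving the first assertion of the corollary.

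For the ``in particular'' clause, I would observe that if Alice has no quantum memory then her register $A$ is trivial (one-dimensional), in which case $\rho_{AM_A}$ is automatically a product state across the cut $A \mid M_A$; the analogous statement holds for Bob. Hence, if neither party had a quantum memory, both $\rho_{AM_A}$ and $\rho_{BM_B}$ would be unentangled with certainty, contradicting the first part of the corollary. Therefore at least one of Alice or Bob must maintain a nontrivial quantum memory.

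There is no real obstacle to this argument: the corollary is essentially a packaging of Theorem~\ref{thm:need quantum} together with the preceding lemma. The only thing worth verifying is that the adversary built in Theorem~\ref{thm:need quantum} fits the adversarial interface underlying the everlasting-security definition for non-interactive QKD -- but it is manifestly of the required form, since it merely intercepts the simultaneous quantum messages, records the classical transcripts, and forwards the quantum registers unchanged, with all post-protocol computation deferred to the inefficient offline phase.
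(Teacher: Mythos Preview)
Your proposal is correct and matches the paper's intended argument: the corollary is stated without proof precisely because it follows immediately by combining the preceding lemma (unentangled $\Rightarrow$ classically-derived keys) with \cref{thm:need quantum} via the contrapositive you describe. Your justification of the ``in particular'' clause (trivial memory register forces a product state) is also exactly the intended reading.
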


\subsection*{Acknowledgments}

GM and MW are supported by the European Union (ERC Starting Grant ObfusQation, 101077455 and SYMOPTIC, 101040907).
GM, MW and TZ acknowledge Deutsche For\-schungs\-gemeinschaft (DFG, German Research Foundation) under Germany's Excellence Strategy - EXC 2092 CASA - 390781972.
MW is also supported by the German Federal Ministry of Research, Technology and Space (QuSol, 13N17173).
ABG is supported by ANR JCJC TCS-NISQ ANR-22-CE47-0004.

\bibliographystyle{splncs04}
\bibliography{everlasting-qkd}

\end{document}